\documentclass[a4paper,10pt]{article}
\usepackage{amsmath}
\usepackage{amsthm}
\usepackage{amsfonts}
\usepackage{amssymb}
\usepackage{graphicx}
\usepackage{color}
\usepackage{fullpage}
\usepackage{multirow}
\usepackage{subfig}
\usepackage{hyperref}
\usepackage{booktabs}
\usepackage{xspace}
\usepackage[numbers,sort&compress]{natbib}
\usepackage{epstopdf}                       % Handles eps figs when using pdflatex
\usepackage{pstool}
\usepackage{psfrag}
\newcommand{\munit}[1]{[\mathrm{#1}]}

\newtheorem{lemma}{Lemma}
\newtheorem{theorem}{Theorem}

\newtheorem{assumption}{Assumption} 
\newtheorem{condition}{Condition}
\theoremstyle{remark}  \newtheorem{remark}{Remark}
\newcommand{\bsym}[1]{\boldsymbol{#1}}
\newcommand{\mrm}[1]{\mathrm{#1}}
\newcommand{\mc}[1]{\mathcal{#1}}

\allowdisplaybreaks

\pdfoutput = 1

\title{Observer Based Path Following for Underactuated Marine Vessels in the Presence of Ocean Currents: \\ A Local Approach \\
With proofs\thanks{The material presented in this paper has been accepted for publication in the proceedings of the IFAC World Congress 2017, Toulouse, France.}}    
\date{}
\author{M. Maghenem \thanks{University Paris-Saclay, Orsay, France. {\tt mohamed.maghenem@l2s.centralesupelec.fr}} 
\and D.J.W. Belleter \thanks{ Centre for Autonomous Marine Operations and Systems (NTNU AMOS), Department of Engineering Cybernetics, Norwegian University of Science and Technology, NO7491 Trondheim, Norway {\tt \string{dennis.belleter,claudio.paliotta,kristin.y.pettersen\string}@itk.ntnu.no}} \thanks{D.J.W. Belleter, C. Paliotta, and K.Y. Pettersen were supported by the Research Council of Norway through its Centers of Excellence funding scheme, project No. 223254 – AMOS.} 
\and C. Paliotta \footnotemark[2] \footnotemark[3] \and K.Y. Pettersen \footnotemark[2] \footnotemark[3]}
%===============
\begin{document}
\maketitle 
\begin{abstract}
In this article a solution to the problem of following a curved path in the presence of a constant unknown ocean current disturbance is presented. The path is parametrised by a path variable that is used to propagate a path-tangential reference frame. The update law for the path variable is chosen such that the motion of the path-tangential frame ensures that the vessel remains on the normal of the path-tangential reference frame. As shown in the seminal work \cite{samson1992path} such a parametrisation is only possible locally. A tube is defined in which the aforementioned parametrisation is valid and the path-following problem is solved within this tube. The size of the tube is proportional to the maximum curvature of the path. It is shown that within this tube, the closed-loop system of the proposed controller, guidance law, and the ocean current observer provides exponential stability of the path-following error dynamics. The sway velocity dynamics are analysed taking into account couplings previously overlooked in the literature, and is shown to remain bounded. Simulation results are presented.
\end{abstract}

\section{Introduction}

In this work we consider path following of an underactuated marine vessel. The problem of path following for underactuated marine vessels has its parallel in the field of mobile robotics. Therefore, a solution for $2$D path following for underactuated marine vessels based on the tools developed in \cite{samson1992path} and \cite{micaelli1993trajectory} was proposed in \citet{encarnaccao2000bpath}, where the path parametrization is used to define the path-following problem and a solution is presented using a nonlinear controller. An observer is used to incorporate the effects of the unknown, but constant ocean current. Part of the state is shown to be stable and the zero dynamics are analysed and shown to be well behaved. However, this is done under the assumption that the total speed is constant. This requires active control of the forward velocity to cancel the effect of the sideways velocity induced by turning. Moreover, the parametrisation from \citet{micaelli1993trajectory} is only valid locally, making the path-following result valid locally. Another local result based on the same parametrisation is obtained in \citet{do2004state}. In this work a practical stability result is shown for the path-following states of an underactuated surface vessel in the presence of an environmental disturbance. However, in \cite{do2004state} there is a problem with the bound of the practical stability result and the error cannot be made arbitrary small. 
Moreover, a simplified model with diagonal system matrices is used and the interconnection between the total velocity and the sideways velocity is not taken into account in the analysis of the zero dynamics. 

In \citet{lapierre2007nonlinear} the path parametrisation is used to solve the path-following problem globally. This is done using another result, first described for the control of mobile robots in \citet{soetanto2003adaptive}. In particular, it is achieved by adapting the parametrisation of the path in order to avoid singularities in the parametrisation of the path. The work in \cite{lapierre2007nonlinear} does not consider environmental disturbances, however. It focuses on stabilisation of the path-following states but does not analyse the zero dynamics. A similar approach is taken in \citet{borhaug2006path} in which the frame is propagated differently. 
In \citet{borhaug2006path} a look-ahead based steering law is used to guide the vehicle to the path. Stability of the path-following errors is shown using cascaded systems theory, and the zero dynamics are analysed and shown to be well behaved. To take into account ocean currents, the work in \citet{borhaug2006path} is extended in \citet{Borhaug2008} by adding integral action to the steering laws. However, the results in \citet{Borhaug2008} are only valid for straight-line path following. The work in \citet{Borhaug2008} was revisited in \citet{caharija2012b} for surface vessels using a relative velocity model. Experimental results were added in \citet{caharija2016integral}. 

To address curved paths, the work of \citet{borhaug2006path} is extended with an ocean current observer in \citet{moe2014path}. However, in \citet{moe2014path} the zero dynamics are not analysed and the suggested input signals contain the unknown ocean current. Another line-of-sight (LOS) guidance approach for path following is presented in \citet{Fossen2003}, which is used to follow a path made of straight-line sections connecting way points. These concepts are further developed to circles in \citet{breivik2004path} where the vessel is regulated to the tangent of its projection on the circle. These works do not consider environmental disturbances.

This paper considers path-following of underactuated marine vessels in the presence of constant ocean currents, for general paths. A line-of-sight guidance law, an ocean current observer, and a local parametrisation of the path are used. Compared to \citet{do2004state}, in this work the parametrisation is adapted to include the effect of the unknown ocean currents and a complete analysis of the sway velocity dynamics are given, taking into account the coupling between the total velocity and the sway velocity. Moreover, the mass and damping matrix are allowed to be non-diagonal and we avoid the problems with the practical stability result.
Due to the locality of the parametrisation it can only be used in a certain tube around the path whose size depends on the maximum curvature of the path. When in this tube, it is shown that the closed-loop system of the controllers and the ocean current observer provides exponential stability of the path-following error dynamics.

The article is organized as follows: in Section \ref{COL-sec:mdl} the vessel model and the problem definition are presented. The path parametrisation is introduced in Section \ref{COL-sec:pd}. Section \ref{COL-sec:crtl} presents the ocean current observer, the guidance law, and controllers. The closed-loop system is then formulated and analysed in Section \ref{COL-sec:clsys}. A simulation case study is presented in Section \ref{COL-sec:case} and conclusions are given in Section \ref{COL-sec:cncl}.

\section{Vessel Model} \label{COL-sec:mdl}
In this section we consider the model which can be used to describe an autonomous surface vessel or an autonomous underwater vehicle moving in a plane. Recall, that the model can be represented in component form as
\begin{subequations} \label{COL-eq:dynsys}
\begin{align}  
\dot{x}&=u_r \cos(\psi) - v_r \sin(\psi) + V_x, \label{COL-eq:xdot} \\
\dot{y}&=u_r \sin(\psi) + v_r \cos(\psi) + V_y, \label{COL-eq:ydot} \\
\dot{\psi} &= r, \label{COL-eq:phidot} \\
\dot{u}_r &= F_{u_r}(v_r,r)-\tfrac{d_{11}}{m_{11}}u_r + \tau_u, \label{COL-eq:urdot}\\
\dot{v}_r &= X(u_r)r+Y(u_r)v_r, \label{COL-eq:vrdot} \\
\dot{r} &= F_r(u_r,v_r,r)+\tau_r, \label{COL-eq:rdot}
\end{align}
\end{subequations}
The functions $X(u_r)$, $Y(u_r)$, $F_u$, and $F_r$ are given by 
\begin{subequations}
\begin{align}
F_{u_r}(v_r,r) &\triangleq  \frac{1}{m_{11}}(m_{22}v_r+m_{23}r)r, \\ 
X(u_r) &\triangleq \frac{m^2_{23}-m_{11}m_{33}}{m_{22}m_{33}-m^2_{23}}u_r + \frac{d_{33}m_{23}-d_{23}m_{33}}{m_{22}m_{33}-m^2_{23}},\\ \label{COL-eq:Xur}
Y(u_r) &\triangleq  \frac{(m_{22}-m_{11})m_{23}}{m_{22}m_{33}-m^2_{23}}u_r - \frac{d_{22}m_{33}-d_{32}m_{23}}{m_{22}m_{33}-m^2_{23}},\\ \label{COL-eq:Yur}
\begin{split}
F_r(u_r,v_r,r)&\triangleq  \frac{m_{23}d_{22}-m_{22}(d_{32}+(m_{22}-m_{11})u_r)}{m_{22}m_{33}-m^2_{23}}v_r \\
&\quad+ \frac{m_{23}(d_{23}+m_{11}u_r)-m_{22}(d_{33}+m_{23}u_r)}{m_{22}m_{33}-m^2_{23}}r.
\end{split}
\end{align}
\end{subequations}
Note that the functions $X(u_r)$ and $Y(u_r)$ are linear functions of the velocity. The kinematic variables are illustrated in Figure \ref{COL-fig:defin}. The ocean current satisfies the following assumption.
\begin{assumption} \label{COL-assum:current}
The ocean current is assumed to be constant and irrotational with respect to the inertial frame, i.e. $\bsym{V}_c\triangleq [V_x,V_y,0]^T$. Furthermore, it is bounded by $V_{\max}>0$ such that $\|\bsym{V}_{c}\|=\sqrt{V^2_{x}+V^2_{y}}\leq V_{\max}$.
\end{assumption}
Moreover, for the considered range of values of the desired surge velocity $u_{rd}$ the following assumption holds.
\begin{assumption} \label{COL-assum:Yur}
It is assumed that $Y(u_r)$ satisfies 
$
Y(u_r) \leq -Y_{\min}< 0, \,\forall u_r \in [-V_{\max},u_{rd}],
$
i.e. $Y(u_r)$ is negative for the range of desired velocities considered.
\end{assumption} 
\begin{remark}
Assumptions \ref{COL-assum:Yur} is satisfied for commercial vessels by design, since the converse would imply an undamped or nominally unstable vessel in sway.
\end{remark}

Additionally we assume that the following assumption holds
\begin{assumption} \label{COL-assum:vel}
It is assumed that $2V_{\max} < u_{rd}(t)~\forall t$, i.e. the desired relative velocity of the vessel is larger than the maximum value of the ocean current.
\end{assumption} 
Assumption \ref{COL-assum:vel} assures that the vessel has enough propulsion power to overcome the ocean current affecting it. The factor two in Assumption \ref{COL-assum:vel} adds some extra conservativeness to bound the solutions of the ocean current observer, this is discussed further in Section \ref{COL-sec:cncl}.

\begin{figure}[!htb]
\centering
\includegraphics[width=.5\columnwidth]{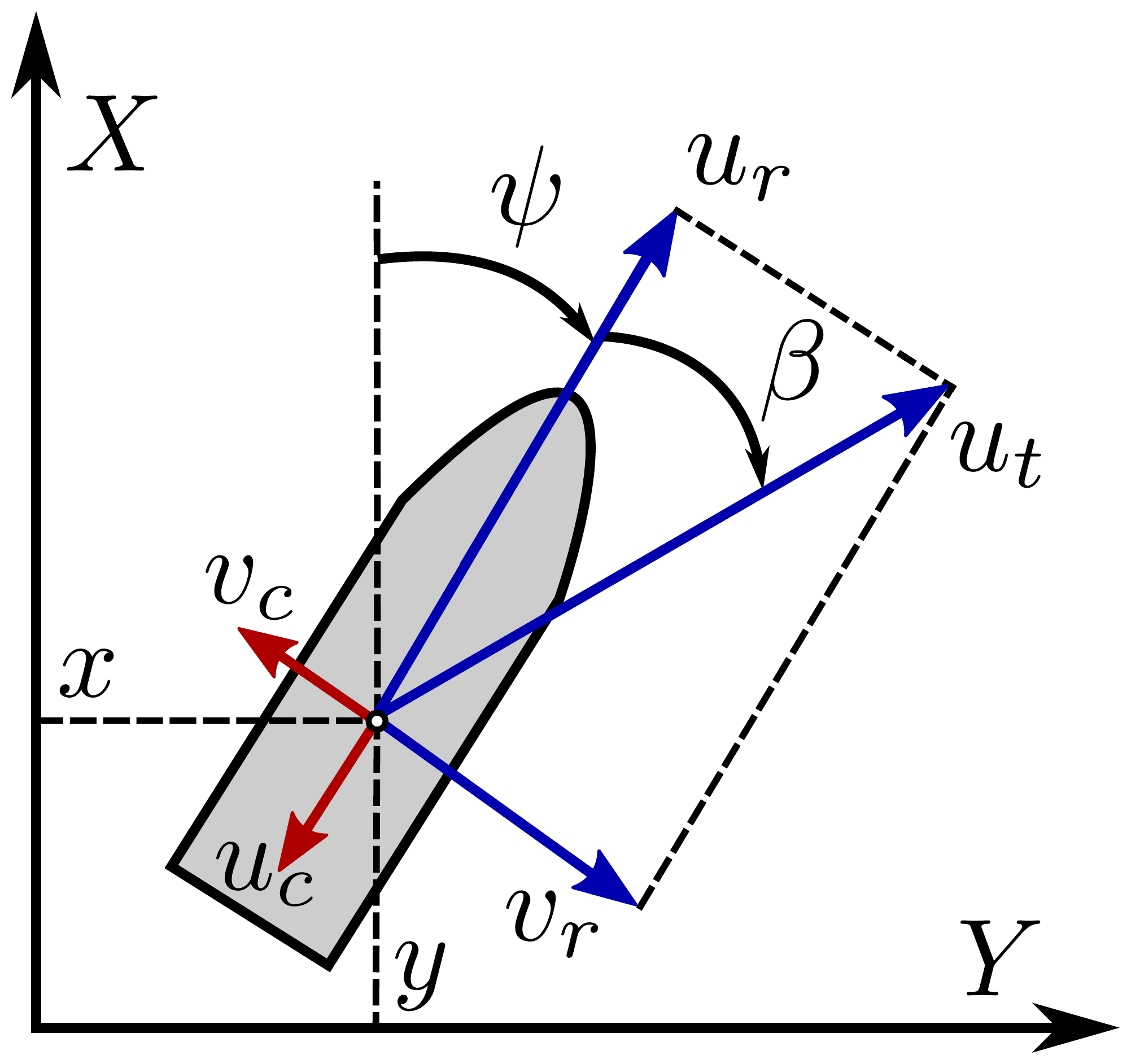}
\caption{Definition of the ship's kinematic variables.}\label{COL-fig:defin}
\end{figure}

\section{Problem definition} \label{COL-sec:pd}
The goal is to follow a smooth path $P$, parametrised by a path variable $\theta$, by appropriately controlling the ship's surge velocity and yaw rate. For an underactuated vessel, path following can be achieved by positioning the vessel on the path with the total velocity $u_t \triangleq \sqrt{u^2_r+v^2_r}$ (see Figure \ref{COL-fig:defin}) tangential to the path. To express the path-following error we propagate a path-tangential frame along $P$ such that the vessel will be on the normal of the path-tangential frame at all time. This is illustrated in Figure \ref{COL-fig:path}. The preceding implies that the progression of the path-tangential frame is controlled such that the path-following error takes the form:
\begin{align} \label{COL-eq:ye}
\begin{bmatrix} x_{b/p} \\ y_{b/p} \end{bmatrix} &= \begin{bmatrix} \cos(\gamma_p(\theta)) & \sin(\gamma_p(\theta)) \\ -\sin(\gamma_p(\theta)) & \cos(\gamma_p(\theta)) \end{bmatrix} \begin{bmatrix} x - x_{P}(\theta) \\ y - y_{P}(\theta) \end{bmatrix} = \begin{bmatrix} 0 \\ y_{b/p} \end{bmatrix}, 
\end{align}
where $\gamma(\theta)$ is the angle of the path with respect to the $X$-axis, $x_{b/p}$ is the deviation from the normal in tangential direction, and $y_{b/p}$ is the deviation from the tangent in normal direction. The time derivative of the angle $\gamma(\theta)$ is given by $\dot{\gamma}(\theta) = \kappa(\theta)\dot{\theta}$ where $\kappa(\theta)$ is the curvature of $P$ at $\theta$. The goal is to regulate $x_{b/p}$ and $y_{b/p}$ to zero.

\subsection{Locally valid parametrisation}
The error in the tangential direction $x_{b/p}$ will be kept at zero by the choice of the update law for the path variable $\theta$, i.e. the vehicle is kept on the normal. It is well known that such a parametrisation will only be unique locally \cite{samson1992path}. In particular, such a unique expression exists when the vehicle is closer to the path than the inverse of the maximum curvature of the path, i.e. when $y_{b/p} < 1/\kappa_{\max}$ where $\kappa_{\max}$ is the maximum curvature of the path. Note that this is equivalent to being closer than the radius of the smallest inscribed circle of the path. To design such a parametrisation we first consider the error dynamics of the vessel with respect to the path frame, which is given by:
\begin{subequations} \label{COL-eq:frame}
\begin{align} 
\dot{x}_{b/p} &= - \dot{\theta} (1 - \kappa(\theta)y_{b/p}) + u_t \cos(\chi - \gamma_p(\theta)) + V_T, \label{COL-eq:framea}\\ 
\dot{y}_{b/p} &= u_t \sin(\chi - \gamma_p(\theta)) + V_N - \kappa(\theta)\dot{\theta}x_{b/p}, \label{COL-eq:frameb}
\end{align}
\end{subequations}
where $\chi \triangleq \psi + \beta$ is the course angle (see Figure \ref{COL-fig:defin}) and $V_T \triangleq V_x \cos(\gamma_p(\theta)) + V_y \sin(\gamma_p(\theta))$ and $V_N \triangleq V_y \cos(\gamma_p(\theta)) - V_x \sin(\gamma_p(\theta))$ are the ocean current component in the tangential direction and normal direction of the path-tangential reference frame, respectively. Consequently, if the path variable $\theta$ is updated according to
\begin{equation} \label{COL-eq:theta}
\dot{\theta} = \frac{u_{t}\cos\left(\chi - \gamma_p(\theta)\right) + V_T}{1 - \kappa(\theta)y_{b/p}},
\end{equation}
the vessel stays on the normal when it starts on the normal. In particular, substitution of \eqref{COL-eq:theta} in \eqref{COL-eq:framea} results in $\dot{x}_{b/p}=0$. To make sure that the update law \eqref{COL-eq:theta} is well defined the following condition should be satisfied
\begin{condition} \label{COL-cond:curv}
To have a well defined update law for the path variable $\theta$ it should hold that
$
1 - \kappa(\theta)y_{b/p} \neq 0
$
for all time.
\end{condition}
Note that Condition \ref{COL-cond:curv} implies that the update law is well defined within the tube of radius $y_{b/p} < 1/\kappa_{\max}$ which results in the parametrisation being only locally valid.

\begin{figure}[!htb]
\centering
\includegraphics[width=.7\columnwidth]{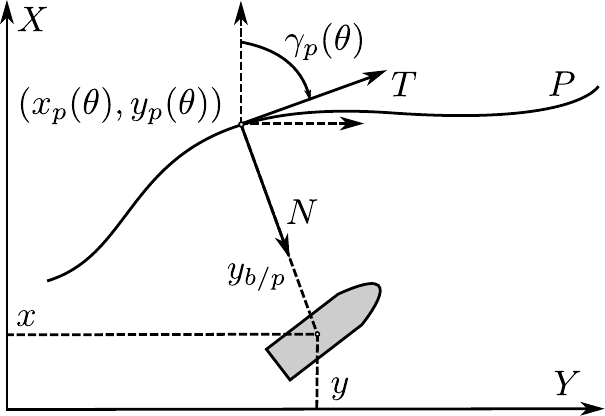}
\caption{Definition of the path.}\label{COL-fig:path}
\end{figure}

The update law \eqref{COL-eq:theta} depends on the current component $V_T$. However, since the current is assumed unknown we have to replace $V_T$ by its estimate $\hat{V}_T \triangleq \hat{V}_x \cos(\gamma(\theta)) + \hat{V}_y \sin(\gamma(\theta))$. Consequently, the last equality of \eqref{COL-eq:ye} does not hold until the current is estimated correctly. Therefore, \eqref{COL-eq:ye} takes the form
\begin{equation} \label{COL-eq:delta_x}
\begin{bmatrix} x_{b/p} \\ y_{b/p} \end{bmatrix} = \begin{bmatrix} \cos(\gamma(\theta)) & \sin(\gamma(\theta)) \\ -\sin(\gamma(\theta)) & \cos(\gamma(\theta)) \end{bmatrix} \begin{bmatrix} x - x_{P}(\theta) \\ y - y_{P}(\theta) \end{bmatrix}.
\end{equation}
To force \eqref{COL-eq:delta_x} to become equal to \eqref{COL-eq:ye} once the ocean current is estimated correctly we augment \eqref{COL-eq:theta} to be
\begin{equation} \label{COL-eq:theta2}
\dot{\theta} = \frac{u_{t}\cos\left(\chi - \gamma_p(\theta)\right) + \hat{V}_T + k_\delta x_{b/p}}{1 - \kappa(\theta)y_{b/p}},
\end{equation}
such that the path-tangential reference frame propagates based on an estimate of the ocean current and has a restoring term to drive $x_{b/p}$ to zero. Hence, substituting \eqref{COL-eq:theta2} in \eqref{COL-eq:framea} gives
\begin{equation}
\dot{x}_{b/p} = -k_\delta x_{b/p} + \tilde{V}_T, \label{COL-eq:ddeltax}\\
\end{equation}
which shows that if the estimate of the current has converged the restoring term $k_\delta x_{b/p}$ remains to drive $x_{b/p}$ to zero after which the vessel remains on the normal of the path-tangential frame.

The dynamics of the error along the normal are given by
\begin{align}
\dot{y}_{b/p} &= u_{t}\sin(\chi - \gamma_p(\theta)) + V_N - x_{b/p}\kappa(\theta)\dot{\theta}. \label{COL-eq:dye}
\end{align}
In the next section a guidance law is chosen to stabilise the origin of the dynamics \eqref{COL-eq:ddeltax}-\eqref{COL-eq:dye} and achieve the goal of path following. 

Note that since the path parametrisation is only local, we can only utilise it within a tube around the path with radius $1/\kappa_{\max}$. To achieve global results this tube needs to be made attractive and invariant, such that the vehicle first converges to the tube after which the unique parametrisation to achieve path-following can be used. The disadvantage of this is that a two-step approach is needed to solve the path-following problem, which complicates the analysis. There is, however, also a big advantage to this approach, since extra design freedom is available when making the tube attractive. This allows one to design the approach behaviour and convergence when far from the path, while for a global one-step approach this is in general not possible to do independently of the behaviour close to the path. Hence, for the one-step approach the global behaviour will be a compromise between the desired behaviour far away from the path and the desired behaviour close to the path. For the two-step approach, the behaviour far away from the path and close to the path can be optimised independently. This, for instance, allows strategies where the vehicle moves along the normal of the path to reach the path as fast as possible. Moreover, in cluttered environments this allows the vessel to converge to the path along a clearly defined approach path, after which it can switch to the guidance strategy that allows it to follow the desired path $P$.  

\section{Controller, Observer, and Guidance}  \label{COL-sec:crtl}
In this section we design the two control laws $\tau_u$ and $\tau_r$, and the ocean current estimator that are used to achieve path-following. In the first subsection we present the velocity control law $\tau_u$. The second subsection presents the ocean current estimator. The third subsection first presents the guidance to be used within the tube. % and the secondly presents the guidance to be used outside the tube to converge to the tube.

\subsection{Surge velocity control}
The velocity control law is a feedback-linearising P-controller that is used to drive the relative surge velocity to a desired $u_{rd}$ and is given by
\begin{equation} \label{COL-eq:tauu}
\tau_u = -F_{u_r}(v_r,r) + \dot{u}_{rd} + \frac{d_{11}}{m_{11}}u_{rd} - k_u (u_r - u_{rd}),
\end{equation}
where $k_u>0$ is a constant controller gain. It is straightforward to verify that \eqref{COL-eq:tauu} ensures global exponential tracking of the desired velocity. In particular, when \eqref{COL-eq:tauu} is substituted in \eqref{COL-eq:urdot} we obtain
\begin{equation} \label{COL-eq:utild}
\dot{\tilde{u}}_r = -k_{u}(u_r-u_{rd}) = -k_u \tilde{u}_r,
\end{equation}
where $\tilde{u}_r \triangleq u_r -u_{rd}$. Consequently, the velocity error dynamics are described by a stable linear systems, which assures exponential tracking of the desired velocity $u_{rd}$.

\subsection{Ocean current estimator} \label{COL-subsec:obs}
This subsection presents the ocean current estimator introduced in \cite{aguiar2007dynamic}. This observer provides the estimate of the ocean current needed to implement \eqref{COL-eq:theta2} and the guidance law developed in the next subsection. Rather than estimating the time-varying current components in the path frame $V_T$ and $V_N$ the observer is used to estimate the constant ocean current components in the inertial frame $V_x$ and $V_y$. The observer from \cite{aguiar2007dynamic} is based on the kinematic equations of the vehicle, i.e. \eqref{COL-eq:xdot} and \eqref{COL-eq:ydot}, and requires measurements of the vehicle's $x$ and $y$ position in the inertial frame. The observer is formulated as
\begin{subequations} \label{COL-eq:obs}
\begin{align}
\dot{\hat{x}} &= u_r\cos(\psi) - v_r\sin(\psi) + \hat{V}_x + k_{x_1} \tilde{x} \\
\dot{\hat{y}} &= u_r\sin(\psi) + v_r\cos(\psi) + \hat{V}_y + k_{y_1} \tilde{y} \\
\dot{\hat{V}}_x &= k_{x_2}\tilde{x} \\
\dot{\hat{V}}_y &= k_{y_2}\tilde{y}
\end{align}
\end{subequations}
where $\tilde{x} \triangleq x - \hat{x}$ and $\tilde{y} = y -\hat{y}$ are the positional errors and $k_{x_1}$, $k_{x_2}$, $k_{y_1}$, and $k_{y_2}$ are constant positive gains. Consequently, the estimation error dynamics are given by
\begin{equation} \label{COL-eq:obserr}
\begin{bmatrix} \dot{\tilde{x}} \\ \dot{\tilde{y}} \\ \dot{\tilde{V}}_x \\ \dot{\tilde{V}}_y \end{bmatrix} = \begin{bmatrix} -k_{x_1} & 0 & 1 & 0 \\ 0 & -k_{y_1} & 0 & 1 \\ -k_{x_2} & 0 & 0 & 0 \\ 0 & -k_{y_2} & 0 & 0 \end{bmatrix} \begin{bmatrix} \tilde{x} \\ \tilde{y} \\ \tilde{V}_x \\ \tilde{V}_y \end{bmatrix}.
\end{equation}
which is a linear system with negative eigenvalues. Hence, the observer error dynamics are globally exponentially stable at the origin. Note that this implies that also $\hat{V}_T$ and $\hat{V}_N$ go to $V_T$ and $V_N$ respectively with exponential convergence since it holds that
\begin{subequations}
\begin{align}
\hat{V}_T &= \hat{V}_x \cos(\gamma(\theta)) + \hat{V}_y \sin(\gamma(\theta)), \\
\hat{V}_N &= -\hat{V}_x \sin(\gamma(\theta)) + \hat{V}_y \cos(\gamma(\theta)).
\end{align}
\end{subequations}

For implementation of the controllers it is desired that $\Vert \hat{V}_N(t) \Vert < u_{rd}(t)~\forall t$. To achieve this we first choose the initial conditions of the estimator as 
$
[\hat{x}(t_0),\hat{y}(t_0),\hat{V}_x(t_0),\hat{V}_y(t_0)]^T = [x(t_0),y(t_0),0,0]^T.
$
Consequently, the initial estimation error is given by 
$
[ \tilde{x}(t_0),\tilde{y}(t_0),\tilde{V}_x(t_0),\tilde{V}_y(t_0)]^T = [0,0,V_x,V_y]^T,
$
which has a norm smaller than or equal to $V_{\max}$ according to Assumption \ref{COL-assum:current}. Now consider the function
\begin{equation}
W(t) = \tilde{x}^2 + \tilde{y}^2 + \frac{1}{k_{x_2}}\tilde{V}^2_x + \frac{1}{k_{y_2}}\tilde{V}^2_y,
\end{equation}
which has the following time derivative
\begin{align}
\begin{split}
\dot{W}(t) &= -2k_{x_1}\tilde{x}^2 - 2k_{y_1}\tilde{y}^2 \leq 0.
\end{split}
\end{align}
This implies that $W(t) \leq \Vert W(t_0) \Vert$. From our choice of initial conditions we know that
\begin{equation}
\Vert W(t_0) \Vert = \frac{1}{k_{x_2}}V^2_x + \frac{1}{k_{y_2}}V^2_y \leq \frac{1}{\min(k_{x_2},k_{y_2})}V^2_{\max}.
\end{equation}
Moreover, it is straightforward to verify
\begin{equation}
\frac{1}{\max(k_{x_2},k_{y_2})}\Vert \tilde{\bsym{V}}_c(t) \Vert^2 \leq W(t).
\end{equation}
Combining the observations given above we obtain
\begin{equation}
\frac{1}{\max(k_{x_2},k_{y_2})}\Vert \tilde{\bsym{V}}_c(t) \Vert^2 \leq \frac{1}{\min(k_{x_2},k_{y_2})}V^2_{\max}.
\end{equation}
Consequently, we obtain
\begin{equation}
\Vert \tilde{\bsym{V}}_c(t) \Vert \leq \sqrt{\frac{\max(k_{x_2},k_{y_2})}{\min(k_{x_2},k_{y_2})}}V_{\max} < \sqrt{\frac{\max(k_{x_2},k_{y_2})}{\min(k_{x_2},k_{y_2})}}u_{rd}(t),~\forall t,
\end{equation}
which implies that if the gains are chosen as $k_{x_2}=k_{y_2}$ we have 
\begin{equation}
\Vert \hat{V}_N \Vert \leq 2 V_{\max} \leq u_{rd}(t),~\forall t.
\end{equation}
Hence, $\Vert \hat{V}_N \Vert < u_{rd}(t),~\forall t$ if $2 V_{\max} < u_{rd}(t),~\forall t$.

\begin{remark}
The bound $2V_{\max} < u_{rd},~\forall t$, is only required when deriving the bound on the solutions of the observer. In particular, it is required to guarantee that $\Vert \hat{V}_N \Vert < u_{rd}(t),~\forall t$. For the rest of the analysis it suffices that $V_{\max} < u_{rd},~\forall t$. Therefore, if the more conservative bound $2V_{\max} < u_{rd},~\forall t$, is not satisfied the observer can be changed to an observer that allows explicit bounds on the estimate $\hat{V}_N$, e.g. the observer developed \citet{narendra1987new}, rather than an observer that only provides a bound on the error $\tilde{\bsym{V}}_c$ as is the case here. For practical purposes the estimate can also be saturated such that $\Vert \hat{V}_N \Vert < u_{rd},~\forall t$, which is the approach taken in \citet{moe2014path}. However, in the theoretical analysis of the yaw controller we use derivatives of $\hat{V}_{N}$ which will be discontinuous when saturation is applied.
\end{remark}

\subsection{Guidance}
%This subsection presents the guidance that is used in combination with the local parametrisation. Since, the chosen parametrisation is only valid in a tube around the path we have to design two guidance laws. The first guidance law is to be used inside the tube and is used to converge to the path. The second guidance law is used to converge to the tube where we can switch to the other guidance law.

This subsection presents the guidance that is used in combination with the local parametrisation. Since, the chosen parametrisation is only valid in a tube around the path, the proposed guidance is designed for operation in the tube. Inside the tube we propose the following guidance law
\begin{equation} \label{COL-eq:psi}
\psi_d = \gamma(\theta) - \mathrm{atan}\left(\frac{v_r}{u_{rd}}\right) - \mathrm{atan}\left(\frac{y_{b/p} + g}{\Delta}\right).
\end{equation} 
The guidance law consists of three terms. The first term is a feedforward of the angle of the path with respect to the inertial frame. The second part is the desired side-slip angle, i.e. the angle between the surge velocity and the total speed when $u_r \equiv u_{rd}$. This side-slip angle is used to make the vehicle's total speed tangential to the path when the sway velocity is non-zero. The third term is a line-of-sight (LOS) term that is intended to steer the vessel to the path, where $g$ is a term dependent on the ocean current. The choice of $g$ provides extra design freedom to compensate for the component of the ocean current along the normal axis $V_N$. To analyse the effect of this guidance law and to design $g$ we consider the error dynamics along the normal \eqref{COL-eq:dye}. To do this we substitute \eqref{COL-eq:psi} in \eqref{COL-eq:dye} and obtain
\begin{subequations} \label{COL-eq:dye2}
\begin{align}
\dot{y}_{b/p} &= u_{td} \sin\left(\psi_d +\tilde{\psi} +\beta_d - \gamma_p(\theta)\right) + V_N - x_{b/p}\kappa(\theta)\dot{\theta} + \tilde{u}_r \sin(\psi - \gamma_p(\theta)) \\
&= -u_{td} \frac{y_{b/p} + g}{\sqrt{(y_{b/p}+g)^2 + \Delta^2}} + V_N + G_1(\tilde{\psi},\tilde{u}_r,x_{b/p},\psi_d,y_{b/p},u_{td},\dot{\gamma}_p(\theta))
\end{align}
\end{subequations}
where $G_1(\cdot)$ is a perturbing term given by
\begin{align} \label{COL-eqeq:G}
\begin{split}
G_1(\cdot ) = &~u_{td}\left[ 1- \cos(\tilde{\psi}) \right]\sin \left( \arctan \left(  \frac{y_{b/p} + g}{\Delta} \right) \right) + \tilde{u}_r \sin(\psi-\gamma_p(\theta))  \\  &+ u_{td}\cos \left( \arctan \left(  \frac{y_{b/p} + g}{\Delta} \right) \right) \sin (\tilde{\psi})  - x_{b/p} \dot{\gamma}_p(\theta)
\end{split}
\end{align} 
and $u_{td} \triangleq \sqrt{u^2_{rd} + v^2_r}$ is the desired total velocity. Note that $G_1(\cdot)$ satisfies
\begin{subequations} \label{COL-eq:Gbnd}
\begin{align}
G_1(0,0,0,\psi_d,y_{b/p},u_{td},\dot{\gamma}_p(\theta)) &= 0 \\
\Vert G_1(\tilde{\psi},\tilde{u}_r,x_{b/p},\psi_d,y_{b/p},u_{td},\dot{\gamma}_p(\theta)) \Vert &\leq \zeta(\dot{\gamma}_p(\theta),u_{td})\Vert (\tilde{\psi},\tilde{u},x_{b/p}) \Vert,
\end{align}
\end{subequations}
where $\zeta(\dot{\gamma}_p(\theta),u_{td}) > 0 $, which shows that $G_1(\cdot)$ is zero when the perturbing variables are zero and that it has maximal linear growth in the perturbing variables. 

To compensate for the ocean current component $V_N$ the variable $g$ is now chosen to satisfy the equality
\begin{equation}
u_{td} \frac{g}{\sqrt{\Delta^2+(y_{b/p} +g)^2}} = \hat{V}_N.
\end{equation}
which is a choice inspired by \cite{moe2014path}.
In order for $g$ to satisfy the equality above, $g$ should be the solution of the following second order equality
\begin{equation}
\underbrace{(u^2_{td} - \hat{V}^2_N)}_{-a}\left(\frac{g}{\hat{V}_N}\right)^2 = \underbrace{\Delta^2 + y^2_{b/p}}_c + 2 \underbrace{y_{b/p} \hat{V}_N}_b \left(\frac{g}{\hat{V}_N}\right),
\end{equation}
hence we choose $g$ to be
\begin{equation}
g = \hat{V}_{N} \frac{b + \sqrt{b^2 - ac}}{-a},
\end{equation}
which has the same sign as $\hat{V}_N$ and is well defined for $(u^2_{rd} - \hat{V}^2_N) > 0$. Moreover, since
\begin{equation}
\sqrt{b^2 - ac} = \sqrt{\Delta^2(u^2_{td}-\hat{V}^2_N)+y^2_{b/p}u^2_{td}}
\end{equation}
solutions are real for $(u^2_{rd} - \hat{V}^2_N) > 0$.

Consequently if we substitute this choice for $g$ in \eqref{COL-eq:dye2} we obtain
\begin{equation} \label{COL-eq:dye3}
\dot{y}_{b/p} = -u_{td} \frac{y_{b/p}}{\sqrt{(y_{b/p}+g)^2 + \Delta^2}} + \tilde{V}_N + G_1(\tilde{\psi},\tilde{u},x_{b/p},\psi_d,y_{b/p},u_{td},\dot{\gamma}_p(\theta)).
\end{equation}

The desired yaw rate can be found by taking the time derivative of \eqref{COL-eq:psi} resulting in
\begin{equation} \label{COL-eq:dpsid}
\dot{\psi}_d = \kappa(\theta)\dot{\theta} + \frac{\dot{v}_r u_{rd} -  \dot{u}_{rd} v_r}{u^2_{rd} + v^2_r} + \frac{\Delta(\dot{y}_{b/p} + \dot{g})}{\Delta^2 + (y_{b/p} + g)^2},
\end{equation}
where $\dot{v}_r$ as given in \eqref{COL-eq:vrdot}, $\dot{y}_{b/p}$ in \eqref{COL-eq:dye3}, and $\dot{g}$ is given by
\begin{equation}
\dot{g} = \dot{\hat{V}}_N \frac{b+ \sqrt{b^2-ac} }{-a} + \frac{\partial g}{\partial a} \dot{a} +  \frac{\partial g}{\partial b} \dot{b} +  \frac{\partial g}{\partial c} \dot{c},
\end{equation}
where
\begin{subequations}
\begin{gather}
  \frac{\partial g}{\partial a} = \hat{V}_N \frac{c}{2 a \sqrt{b^2 - ac}} + \hat{V}_N \frac{b+\sqrt{b^2-ac}}{a^2}, \\
  \dot{a} = 2 \hat{V}_N \dot{\hat{V}}_N - 2 u_{rd} \dot{u}_{rd} - 2 v_r \left[ X(u_r) r + Y(u_r) v_r \right], \\
   \frac{\partial g}{\partial b} = \hat{V}_N \frac{b+\sqrt{b^2-ac}}{a\sqrt{b^2 - ac}}, \\
  \dot{b} = 2 \hat{V}_N \dot{y}_{b/p} + 2 \dot{\hat{V}}_N y_{b/p}, \qquad \frac{\partial g}{\partial c} = \hat{V}_N \frac{1}{2 \sqrt{b^2 - ac}}, \qquad \dot{c} = 2 y_{b/p} \dot{y}_{b/p}.
\end{gather}
\end{subequations}
Note that $\dot{y}_{b/p}$ appears a number of times in the expression for $\dot{\psi}_d$ and that $\dot{y}_{b/p}$ depends on $\tilde{V}_N$. Consequently, $\dot{\psi}_d$ depends on an unknown variable and cannot be used to control the yaw rate. This was not considered in \cite{moe2014path} where the proposed controller contained both $\dot{\psi}_d$ and $\ddot{\psi}_d$.

Moreover, since $\dot{\psi}_d$ contains $\dot{v}_r$, which depends on $r = \dot{\psi}$, the yaw rate error $\dot{\tilde{\psi}} \triangleq \dot{\psi} - \dot{\psi}_d$ grows with $\dot{\psi}$ which leads to a necessary condition for a well defined yaw rate error. The yaw rate error dynamics are given by
\begin{align} \label{COL-eq:27}
\begin{split}
\dot{\tilde{\psi}} = &~r \left[ 1 + \frac{ X(u_r) u_{rd} } { u^2_{rd} + v^2_r } - \frac{ \Delta }{\Delta^2 + \left( y_{b/p} + g \right)^2 } \frac{\partial g}{\partial a} \left( 2 v_r X(u_r) \right) \right]  \\ &- \kappa(\theta) \dot{\theta} + \frac{ Y(u_r) v_r  u_{rd} - \dot{u}_{rd} v_r } { u^2_{rd} + v^2_r }  \\ &+ \frac{ \Delta }{\Delta^2 + \left( y_{b/p} + g \right)^2 }  \dot{\hat{V}}_N \frac{ b + \sqrt{b^2-ac} }{-a}  \\ &+ \frac{ \Delta }{ \Delta^2 + \left( y_{b/p} + g \right)^2 } \frac{\partial g}{\partial a} \left( 2 \hat{V}_N \dot{\hat{V}}_N - 2 u_{rd} \dot{u}_{rd} - 2 v_r Y(u_r) v_r \right) \\ &+ \frac{ \Delta }{\Delta^2 + \left( y_{b/p} + g \right)^2 } \frac{\partial g}{\partial b} \left( 2 \dot{\hat{V}}_N y_{b/p} \right)  \\ &+ \left[ 1 + \frac{\partial g}{\partial c} 2 y_{b/p} +  \frac{\partial g}{\partial b} \left( 2 \hat{V}_N \right) \right] \frac{ \Delta \dot{y}_{b/p}}{\Delta^2 + \left( y_{b/p} + g \right)^2 }
\end{split}
\end{align}
which leads to the following necessary condition for a well defined yaw rate, i.e. existence of the yaw controller,
\begin{condition} \label{COL-cond:Cr}
To have a well defined yaw controller it should hold that
\begin{equation}
C_r \triangleq 1 + \frac{ X(u_r) u_{rd} } { u^2_{rd} + v^2_r } - \frac{\partial g}{\partial a} \frac{ 2 v_r X(u_r)\Delta }{\Delta^2 + \left( y_{b/p} + g \right)^2 }  \neq 0.
\end{equation}
for all time after entering the tube.
\end{condition}

\begin{remark}
The condition above can be verified for any positive velocity, for the vehicles considered in this thesis. Note that for most vessels this condition is verifiable since standard ship design practices will result in similar properties of the function $X(u_r)$. Besides having a lower bound greater then zero $C_r$ is also upper-bounded since the term between brackets can be verified to be bounded in its arguments.
\end{remark}

Since $\dot{\psi}_d$ depends on the unknown signal $\tilde{V}_N$ we cannot take $\dot{\psi}_d = r_d$. To define an expression for $r_d$ without requiring the knowledge of $\tilde{V}_N$ we use \eqref{COL-eq:27} to define 
\begin{align} \label{COL-eq:rd}
\begin{split}
r_d \triangleq & -\frac{1}{C_r} \left[ \kappa(\theta) \left( \frac{u_t \cos(\psi + \beta - \gamma_p(\theta)) + k_\delta x_{b/p} + \hat{V}_{T}}{ 1 - \kappa(\theta) y_{b/p} } \right) \right. \\ &+ \left.  \frac{ Y(u_r) v_r  u_{rd} - \dot{u}_{rd} v_r } { u^2_{rd} + v^2_r } + \frac{ \Delta }{\Delta^2 + \left( y_{b/p} + g \right)^2 } \left[  \dot{\hat{V}}_{N} \frac{ b + \sqrt{b^2-ac} }{-a} \right. \right. \\ &+ \left. \frac{\partial g}{\partial a} \left( 2 \hat{V}_{N} \dot{\hat{V}}_{N} - 2 u_{rd} \dot{u}_{rd} - 2 v_r Y(u_r) v_r \right) + \frac{\partial g}{\partial b} \left( 2 \dot{\hat{V}}_{N} y_{b/p} \right) \right. \\ &+ \left. \left. \left[ 1 + \frac{\partial g}{\partial c} 2 y_{b/p}  + \frac{\partial g}{\partial b} 2 \hat{V}_{N} \right]   \left(  \frac{- u_{td}  y_{b/p} } { \sqrt{\Delta^2 + (y_{b/p} + g)^2 } } + G_1(\cdot)  \right) \right]\right]	
\end{split}
\end{align}
which results in the following yaw angle error dynamics
\begin{align} \label{COL-eq:dpsitilde}
\dot{\tilde{\psi}} = &~C_r \tilde{r} + \left[ 1 + \frac{\partial g}{\partial c} 2 y_{b/p} +  \frac{\partial g}{\partial b} \left( 2 \hat{V}_{N} \right) \right] \frac{ \Delta \tilde{V}_{N} }{\Delta^2 + \left( y_{b/p} + g \right)^2 } 
\end{align} 
where $\tilde{r} \triangleq r - r_d$ is the yaw rate error. From \eqref{COL-eq:dpsitilde} it can be seen that choosing $r_d$ as in \eqref{COL-eq:rd} results in yaw angle error dynamics that have a term dependent on the yaw rate error $\tilde{r}$ and a perturbing term that vanishes when the estimation error $\tilde{V}_N$ goes to zero. 

To add acceleration feedforward to the yaw rate controller, the derivative of $r_d$ needs to be calculated. However, when we analyse the dependencies of $r_d$ we obtain
\begin{align} \label{COL-eqeq:rd} 
r_d = & r_d( h, y_{b/p}, x_{b/p}, \tilde{\psi}, \tilde{x}, \tilde{y} ),
\end{align}
where $h = [\theta, v_r, u_r, u_{rd}, \dot{u}_{rd}, \hat{V}_{T}, \hat{V}_{N}]^T$ is introduced for the sake of brevity and represents all the variables whose derivatives do not contain $\tilde{V}_N$ or $\tilde{V}_T$. Consequently, the acceleration feedforward cannot be taken as $\dot{r}_d$ since using \eqref{COL-eqeq:rd}, \eqref{COL-eq:ddeltax}, and \eqref{COL-eq:dye} it is straightforward to verify this signal contains the unknowns $\tilde{V}_T$ and $\tilde{V}_N$. Therefore we define the yaw rate controller in terms of only known signals as:
\begin{align} \label{COL-eq:tau_r}
\begin{split}
\tau_r = & - F(u_r, v_r, r) + \frac{\partial r_d}{\partial h^T} \dot{h} + \frac{\partial r_d}{\partial y_{b/p}} \left( - u_{td}  \frac{ y_{b/p} } { \sqrt{\Delta^2 + (y_{b/p} + g)^2 } } + G_1(\cdot) \right) \\ &  + \frac{\partial r_d}{\partial x_{b/p}} \left( -k_\delta x_{b/p} \right) + 
\frac{\partial r_d}{ \partial \tilde{\psi}} C_r \tilde{r}  - \frac{\partial r_d}{ \partial \tilde{x}} k_x \tilde{x} - \frac{\partial r_d}{ \partial \tilde{y}} k_y \tilde{y} - k_1 \tilde{r} - k_2 \tilde{\psi}
\end{split}
\end{align}
Using \eqref{COL-eq:tau_r} in \eqref{COL-eq:rdot} we then obtain the yaw rate error dynamics
\begin{align} \label{COL-eq:32}
\begin{split}
\dot{\tilde{r}} = & - k_1 \tilde{r} - k_2C_r \tilde{\psi}  - \frac{\partial r_d}{ \partial \tilde{\psi}} \left[ 1 + \frac{\partial g}{\partial c} 2 y_{b/p}  +  \frac{\partial g}{\partial b} \left( 2 \hat{V}_{N} \right)  \right] \frac{ \Delta \tilde{V}_{N} }{\Delta^2 + \left( y_{b/p} + g \right)^2 } \\ & - \frac{\partial r_d}{\partial y_{b/p}} \tilde{V}_{N}  - \frac{\partial r_d}{\partial x_{b/p}} \tilde{V}_{T} + \frac{\partial r_d}{\partial \tilde{x}} \tilde{V}_x + \frac{\partial r_d}{ \partial \tilde{y}} \tilde{V}_y
\end{split}
\end{align}
which has a term depending on the yaw angle error, a term depending on the yaw rate error, and perturbing terms depending on the unknown ocean current estimation error.

%\begin{remark}
%It is straightforward to verify that all the terms in \eqref{COL-eq:dpsid} are bounded with respect to $(y_{b/p},x_{b/p},\tilde{x},\tilde{y},\tilde{\psi})$ and consequently the partial derivatives in \eqref{COL-eq:tau_r} and \eqref{COL-eq:32} are all bounded. This is something that is used when showing closed-loop stability in the next section.
%\end{remark}
\begin{remark}
It is straightforward to verify that all the terms in \eqref{COL-eq:dpsid} are smooth fractionals that are bounded with respect to $(y_{b/p}$, $x_{b/p}$, $\tilde{x}$, $\tilde{y}$, $\tilde{\psi})$ or are periodic functions with linear arguments and consequently the partial derivatives \eqref{COL-eq:tau_r} and \eqref{COL-eq:32} are all bounded. This is something that is used when showing closed-loop stability in the next section.
\end{remark}

%\subsubsection{Guidance outside the tube}
%
%This part has yet to be designed. Something simple can suffice here such as simply using a straight-line path-following guidance along the normal while keeping the frame fixed at the at the initially closest point. More specifically, using the guidance law:
%\begin{equation} \label{COL-eq:psiout}
%\psi_d = \begin{cases} -\frac{\pi}{2} + \gamma(\theta_0) - \mathrm{atan}\left(\frac{v_r}{u_{rd}}\right) - \mathrm{atan}\left(\frac{x_{b/p} + g}{\Delta}\right) \quad \mathrm{if} \; y_{b/p}(t=0) >0 \\ \frac{\pi}{2} + \gamma(\theta_0) - \mathrm{atan}\left(\frac{v_r}{u_{rd}}\right) + \mathrm{atan}\left(\frac{x_{b/p} + g}{\Delta}\right) \quad \mathrm{if} \; y_{b/p}(t=0) < 0 \end{cases}
%\end{equation} 
%where $\theta_0$ is the closest point on the path to the ship initially and the line-of-sight guidance now works with $x_{b/p}$ to move along the normal. The direction the vessel should move along the normal depends on the initial condition for the error $y_{b/p}$. 

\section{Closed-Loop Analysis} \label{COL-sec:clsys}

In this section we analyse the closed-loop system of the model \eqref{COL-eq:dynsys} with controllers \eqref{COL-eq:tauu} and \eqref{COL-eq:tau_r} and observer \eqref{COL-eq:obs} when the frame propagates with \eqref{COL-eq:theta2} along the path $P$. To show that path following is achieved we have to show that the following error dynamics converge to zero
\begin{subequations}  \label{COL-eq:clerr}
\begin{align}
\dot{x}_{b/p} = & -k_\delta x_{b/p}+ \tilde{V}_T \\
\dot{y}_{b/p} = & - u_{td}  \frac{ y_{b/p} } { \sqrt{\Delta^2 + (y_{b/p} + g)^2 } } + G_1(\cdot) + \tilde{V}_{N} \\
\dot{\tilde{\psi}} = &~C_r \tilde{r} + \left[ 1 + \frac{\partial g}{\partial c} 2 y_{b/p} + \frac{\partial g}{\partial b} \left( 2 \hat{V}_{N} \right)  \right] \frac{ \Delta \tilde{V}_{N} }{\Delta^2 + \left( y_{b/p} + g \right)^2 } \\
\begin{split}
\dot{\tilde{r}} = & - k_1 \tilde{r} - k_2C_r \tilde{\psi} - \frac{\partial r_d}{\partial y_{b/p}} \tilde{V}_{N}  - \frac{\partial r_d}{\partial x_{b/p}} \tilde{V}_{T} + \frac{\partial r_d}{\partial \tilde{x}} \tilde{V}_x + \frac{\partial r_d}{ \partial \tilde{y}} \tilde{V}_y \\ & - \frac{\partial r_d}{ \partial \tilde{\psi}} \left[ 1 + \frac{\partial g}{\partial c} 2 y_{b/p} + \frac{\partial g}{\partial b} \left( 2 \hat{V}_{N} \right)  \right] \frac{ \Delta \tilde{V}_{N} }{\Delta^2 + \left( y_{b/p} + g \right)^2 }  \end{split} \\
\dot{\tilde{u}} = & -\left(k_u + \frac{d_{11}}{m_{11}}\right) \tilde{u}
\end{align}
\end{subequations}
The system \eqref{COL-eq:clerr} has the following perturbed form:
\begin{align} \label{COL-eq:clerrpt}
\dot {\tilde{X}} \triangleq & \begin{bmatrix} \dot{x}_{b/p} \\ \dot{y}_{b/p} \\  \dot{\tilde{\psi}} \\  \dot{\tilde{r}} \\ \dot{\tilde{u}} \end{bmatrix}  = \begin{bmatrix} -k_{\delta}x_{b/p} \\  - u_{td}  \frac{ y_{b/p} } { \sqrt{\Delta^2 + (y_{b/p} + g)^2 } } + G_1(\cdot) \\  C_r \tilde{r} \\ - k_1 \tilde{r} - k_2 C_r\tilde{\psi} \\ -k_3 \tilde{u}   \end{bmatrix} + \nonumber \\ & \begin{bmatrix} \tilde{V}_T \\ \tilde{V}_{N} \\ \left[ 1 + \frac{\partial g}{\partial c} 2 y_{b/p} + \frac{\partial g}{\partial b} \left( 2 \hat{V}_{N} \right)  \right] \frac{ \Delta \tilde{V}_{N} }{\Delta^2 + \left( y_{b/p} + g \right)^2 } \\ - \frac{\partial r_d}{\partial \bsym{p}_{b/p}}\begin{bmatrix}\tilde{V}_{T}\\\tilde{V}_{N}\end{bmatrix} - \frac{\partial r_d}{ \partial \tilde{\psi}} \left[ 1 + \frac{\partial g}{\partial c} 2 y_{b/p} + \frac{\partial g}{\partial b} 2 \hat{V}_{N} \right] \frac{ \Delta \tilde{V}_{N} }{\Delta^2 + \left( y_{b/p} + g \right)^2 } - \frac{\partial r_d}{ \partial \tilde{\bsym{p}_{b/p}}} \tilde{\bsym{V}}_c \\ 0  \end{bmatrix}
\end{align} 
where $\bsym{p}_{b/p} \triangleq [x_{b/p},y_{b/p}]^T$ and all the perturbing terms disappear as the current estimates converge to zero. In particular, we cannot apply our desired control action whilst the current estimates have not converged yet, since the current cannot be compensated for until it is estimated correctly.

The full closed-loop system of the model \eqref{COL-eq:dynsys} with controllers \eqref{COL-eq:tauu} and \eqref{COL-eq:tau_r} and observer \eqref{COL-eq:obs} is given by 
\begin{subequations} \label{COL-eq:clfull}
\begin{align} 
\dot {\tilde{X}}_1 &\triangleq \begin{bmatrix} \dot{y}_{b/p} \\  \dot{\tilde{\psi}} \\  \dot{\tilde{r}} \end{bmatrix}  = \begin{bmatrix}  - u_{td}  \frac{ y_{b/p} } { \sqrt{\Delta^2 + (y_{b/p} + g)^2 } } + G_1(\cdot) \\  C_r \tilde{r} \\ - k_1 \tilde{r} - k_2C_r \tilde{\psi} \end{bmatrix} + \nonumber \\ & \begin{bmatrix} \tilde{V}_{N} \\ \left[ 1 + \frac{\partial g}{\partial c} 2 y_{b/p} + \frac{\partial g}{\partial b} \left( 2 \hat{V}_{N} \right)  \right] \frac{ \Delta \tilde{V}_{N} }{\Delta^2 + \left( y_{b/p} + g \right)^2 } \\ - \frac{\partial r_d}{\partial \bsym{p}_{b/p}}\begin{bmatrix}\tilde{V}_{T}\\\tilde{V}_{N}\end{bmatrix} - \frac{\partial r_d}{ \partial \tilde{\psi}} \left[ 1 + \frac{\partial g}{\partial c} 2 y_{b/p} + \frac{\partial g}{\partial b} 2 \hat{V}_{N} \right] \frac{ \Delta \tilde{V}_{N} }{\Delta^2 + \left( y_{b/p} + g \right)^2 } - \frac{\partial r_d}{ \partial \tilde{\bsym{p}}} \tilde{\bsym{V}}_c \end{bmatrix} \label{COL-eq:35a}  \\
\dot {\tilde{X}}_2 &\triangleq \begin{bmatrix} \dot{x}_{b/p} \\  \dot{\tilde{x}} \\ \dot{\tilde{y}} \\ \dot{ \tilde{V}}_x \\ \dot{ \tilde{V}}_y \\ \dot{\tilde{u}} \end{bmatrix}  =  \begin{bmatrix} - k_\delta x_{b/p}   + \tilde{V}_{T}  \\  -k_{x} \tilde{x} - \tilde{V}_{x} \\  -k_{y} \tilde{y} - \tilde{V}_{y} \\  - k_{x1} \tilde{x} \\ - k_{y1} \tilde{y} \\ -k_u \tilde{u} \end{bmatrix}  \label{COL-eq:35b} \\
\dot{v}_r &=   X ( u_{rd} + \tilde{u} ) r_d ( h, y_{b/p}, x_{b/p}, \tilde{\psi}, \tilde{x}, \tilde{y}) + X( u_{rd} + \tilde{u} ) \tilde{r} + Y( u_{rd} + \tilde{u} ) v_r  \label{COL-eq:35c}
\end{align} 
\end{subequations}

Before starting with the stability analysis of \eqref{COL-eq:clfull}, we first establish GES of \eqref{COL-eq:35b} by using the following lemma.  

\begin{lemma} \label{COL-lemx}
The system \eqref{COL-eq:35b} is GES.
\end{lemma}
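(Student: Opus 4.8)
The plan is to exploit the cascade (lower-triangular) structure of \eqref{COL-eq:35b}. The six scalar states split into a driving block $(\tilde{x},\tilde{y},\tilde{V}_x,\tilde{V}_y,\tilde{u})$ whose equations do not involve $x_{b/p}$, and a single driven state $x_{b/p}$ that is forced only through the term $\tilde{V}_T=\tilde{V}_x\cos(\gamma(\theta))+\tilde{V}_y\sin(\gamma(\theta))$. I would therefore first establish GES of the driving block, then propagate this to $x_{b/p}$, and conclude by a standard cascade argument (alternatively, a single composite quadratic Lyapunov function with a sufficiently large weight on the observer-error part would also work).

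First, the $(\tilde{x},\tilde{y},\tilde{V}_x,\tilde{V}_y)$ components are exactly the observer error dynamics \eqref{COL-eq:obserr}, which have already been shown in Section~\ref{COL-subsec:obs} to be globally exponentially stable at the origin; and $\dot{\tilde{u}}=-k_u\tilde{u}$ is trivially GES. Being a decoupled assembly of GES subsystems, the driving block is GES, so there exist $c_1,\lambda_1>0$ such that its solution is bounded by $c_1 e^{-\lambda_1(t-t_0)}$ times its initial norm, uniformly in $t_0$.

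Next, for the driven state I would apply variation of constants to $\dot{x}_{b/p}=-k_\delta x_{b/p}+\tilde{V}_T$, using the uniform (in $\theta$, hence in $t$) bound $|\tilde{V}_T(s)|\le\sqrt{\tilde{V}_x(s)^2+\tilde{V}_y(s)^2}$, which decays exponentially by the previous step. Convolving the two exponentials gives $|x_{b/p}(t)|\le \alpha e^{-\lambda(t-t_0)}\bigl(|x_{b/p}(t_0)|+\|\tilde{X}_2(t_0)\|\bigr)$ with $\lambda=\min\{k_\delta,\lambda_1\}$, treating the resonant case $k_\delta=\lambda_1$ in the usual way by absorbing the polynomial factor into a slightly smaller rate. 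Combining this with the driving-block estimate yields $\|\tilde{X}_2(t)\|\le \beta e^{-\lambda(t-t_0)}\|\tilde{X}_2(t_0)\|$ for all $t\ge t_0$ and all initial conditions, which is precisely GES of \eqref{COL-eq:35b}.

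There is no real obstacle here: \eqref{COL-eq:35b} is an (almost) linear, lower-triangular system built entirely from exponentially stable blocks. The only mild point requiring attention is that the system is time-varying through $\gamma(\theta(t))$, so the cascade bound must be phrased uniformly in $t$; since $|\tilde{V}_T|\le\|(\tilde{V}_x,\tilde{V}_y)\|$ irrespective of $\gamma$, this causes no difficulty, and a citation of any standard cascade lemma (a GES system cascaded through a linear, hence globally Lipschitz, interconnection into a GES system is GES) would suffice in place of the explicit convolution estimate.
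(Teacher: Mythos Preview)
Your proposal is correct and follows essentially the same approach as the paper: split \eqref{COL-eq:35b} into the GES driving block $(\tilde{x},\tilde{y},\tilde{V}_x,\tilde{V}_y,\tilde{u})$ and the driven state $x_{b/p}$, bound the interconnection by $|\tilde{V}_T|\le\|(\tilde{V}_x,\tilde{V}_y)\|$, and conclude GES of the cascade. The only cosmetic difference is that the paper invokes the cascade results of \citet{panteley1998global} and \citet{loria2004cascaded} directly, whereas you carry out the variation-of-constants estimate explicitly (and then note that a cascade lemma would do the same job).
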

\begin{proof}
Note that \eqref{COL-eq:35b} is a cascaded system of the form
\begin{subequations} \label{COL-eq:lemxcasc}
\begin{align}
\dot{x}_{b/p} &= - k_\delta x_{b/p}   + \tilde{V}_{T}, \label{COL-eq:lemxcasca} \\ 
\begin{bmatrix}  \dot{\tilde{x}} \\ \dot{\tilde{y}} \\ \dot{ \tilde{V}}_x \\ \dot{ \tilde{V}}_y \\ \dot{\tilde{u}} \end{bmatrix} & =  \begin{bmatrix}   -k_{x} \tilde{x} - \tilde{V}_{x} \\  -k_{y} \tilde{y} - \tilde{V}_{y} \\  - k_{x1} \tilde{x} \\ - k_{y1} \tilde{y} \\ -k_u \tilde{u} \end{bmatrix}. \label{COL-eq:lemxcascb}
\end{align}
\end{subequations}
The nominal dynamics of \eqref{COL-eq:lemxcasc} are given by $\dot{x}_{b/p} = - k_\delta x_{b/p}$ from \eqref{COL-eq:lemxcasca}, which is a stable linear system and thus GES. The perturbing dynamics are given by \eqref{COL-eq:lemxcascb} and where shown to be GES in Section 4 of the paper. The interconnection term is the term $\tilde{V}_T$ from \eqref{COL-eq:lemxcasca}. The growth of the interconnection term can be bounded by $\Vert \tilde{V}_T \Vert \leq \Vert [\tilde{V}_x,\tilde{V}_y]^T\Vert$, which satisfies the condition for the interconnection term from \citet[Theorem~2]{panteley1998global}. Note that it is trivial to shown the nominal dynamics admit the quadratic Lyapunov function $V_{x_{b/p}} = 1/2x^2_{b/p}$. Consequently, all the conditions of \citet[Theorem~2]{panteley1998global} and \citet[Proposition~2.3]{loria2004cascaded} are satisfied. Therefore, the cascaded system \eqref{COL-eq:lemxcasc} is GES, which implies that \eqref{COL-eq:35b} is GES.
\end{proof}

Note that although we show that the system \eqref{COL-eq:35b} is GES, the dynamics of $x_{b/p}$ are only defined in the tube to avoid the singularity in the parametrisation. Hence, the stability result is only valid in the tube.

The first step in the stability analysis of \eqref{COL-eq:clfull} is to assure that the closed-loop system is forward complete and that the sway velocity $v_r$ remains bounded. Therefore, under the assumption that Condition \ref{COL-cond:curv}-\ref{COL-cond:Cr} are satisfied, i.e. $ 1-\kappa(\theta)y_{b/p} \neq 0 $ and $ C_r \neq 0 $, we take the following three steps:
%%%%%%%%%%%%%%%%%%%%%%%%%%%%%%%
\begin{enumerate}
\item First, we prove that the trajectories of the closed-loop system are forward complete. 
\item Then, we derive a necessary condition such that $v_r$ is locally bounded with respect to $( \tilde{X}_1, \tilde{X}_2 )$.
\item Finally, we establish that for a sufficiently big value of $\Delta$, $v_r$ is locally bounded only with respect to $\tilde{X}_2$. 
\end{enumerate}

The above three steps are taken by formulation and proving three lemmas. For the sake of brevity in the main body of this paper the proofs of the following lemmas are replaced by a sketch of each proof in the main body. The full proofs are reported to the Appendix.

%%%%%%%%%%%%%%%%%%%%%%%%%%%%%%
\begin{lemma} [Forward completeness] \label{COL-lem1}
The trajectories of the global closed-loop system \eqref{COL-eq:clfull} are forward complete.
\end{lemma}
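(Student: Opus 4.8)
The plan is to show that the closed-loop trajectory has no finite escape time by reducing everything to an affine growth estimate for the partial state $\xi\triangleq(y_{b/p},\tilde{\psi},\tilde{r},v_r)$ and then invoking the comparison lemma. I would work on the maximal interval of existence $[t_0,T)$; by hypothesis Conditions~\ref{COL-cond:curv}--\ref{COL-cond:Cr} hold there, so $|1-\kappa(\theta)y_{b/p}|\ge c_0>0$ and $0<c_1\le|C_r|\le c_2$. First I would dispose of the lower block \eqref{COL-eq:35b}: by Lemma~\ref{COL-lemx} it is GES, and since its dynamics receive no feedback from $(\tilde{X}_1,v_r,\theta)$ (the only coupling, $\tilde{V}_T$ in $\dot{x}_{b/p}$, is majorised by $\Vert\tilde{\bsym{V}}_c\Vert$), $\tilde{X}_2(t)$ exists and is uniformly bounded on $[t_0,\infty)$. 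Hence $\tilde{u}$, $u_r=u_{rd}+\tilde{u}$, $X(u_r)$, $Y(u_r)$, $x_{b/p}$, $\tilde{x}$, $\tilde{y}$, $\hat{V}_x$, $\hat{V}_y$, $\hat{V}_T$, $\hat{V}_N$ and $\tilde{\bsym{V}}_c$ are all bounded; moreover $\Vert\hat{V}_N\Vert\le 2V_{\max}$ from Section~\ref{COL-subsec:obs}, so by Assumption~\ref{COL-assum:vel} the quantity $u_{td}^2-\hat{V}_N^2\ge u_{rd}^2-4V_{\max}^2>0$ is uniformly bounded below while growing like $v_r^2$. Since $\dot{\theta}$, $\dot{\psi}=r$ and the inertial/observer positions have right-hand sides continuous in $(\tilde{X}_1,\tilde{X}_2,v_r,\theta)$ plus bounded exogenous signals, it suffices to bound $\xi$ on $[t_0,T)$.

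The central step is to verify $\Vert\dot{\xi}\Vert\le c(1+\Vert\xi\Vert)$ on $[t_0,T)$, with $c$ depending only on the bounded quantities above, by estimating the right-hand sides of \eqref{COL-eq:35a} and \eqref{COL-eq:35c} term by term. Three structural facts do the work: (a) the defining relation $u_{td}\,g=\hat{V}_N\sqrt{\Delta^2+(y_{b/p}+g)^2}$, which collapses $\hat{V}_N\,\dfrac{b+\sqrt{b^2-ac}}{-a}\,\dfrac{\Delta}{\Delta^2+(y_{b/p}+g)^2}=\dfrac{\Delta}{u_{td}\sqrt{\Delta^2+(y_{b/p}+g)^2}}\le \dfrac1{u_{td}}$ and, via $\bigl|\tfrac{y_{b/p}+g}{\sqrt{\Delta^2+(y_{b/p}+g)^2}}\bigr|\le1$, gives $\bigl|u_{td}\,\tfrac{y_{b/p}}{\sqrt{\Delta^2+(y_{b/p}+g)^2}}\bigr|\le u_{td}+\Vert\hat{V}_N\Vert$; (b) the identities $\tfrac{\partial g}{\partial b}=-\tfrac{g}{\sqrt{b^2-ac}}$ and $\tfrac{\partial g}{\partial a}=\tfrac{g}{|a|}+\hat{V}_N\tfrac{c}{2a\sqrt{b^2-ac}}$, combined with $|a|=u_{td}^2-\hat{V}_N^2\ge\rho\,u_{td}^2$ and $\sqrt{b^2-ac}\ge\sqrt{\rho}\,u_{td}\sqrt{\Delta^2+y_{b/p}^2}$ for $\rho\triangleq 1-4V_{\max}^2/u_{rd}^2>0$, which yield $|g|\le c(\Delta+|y_{b/p}|)$, $|\tfrac{\partial g}{\partial c}2y_{b/p}|\le 1$, $|\tfrac{\partial g}{\partial b}2\hat{V}_N|\le c$, and $|\tfrac{\partial g}{\partial a}|\le c(\Delta+|y_{b/p}|)/u_{td}^2$; and (c) the lower bound on $|1-\kappa(\theta)y_{b/p}|$, which makes $|\dot{\theta}|$, $|\dot{\gamma}_p|$, and hence $|\dot{\hat{V}}_N|=|-k_{x_2}\tilde{x}\sin\gamma+k_{y_2}\tilde{y}\cos\gamma-\kappa\dot{\theta}\hat{V}_T|$, bounded by $c(1+|v_r|)$. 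With these, $|G_1(\cdot)|\le c(1+|v_r|)$, so $|\dot{y}_{b/p}|\le c(1+|v_r|)$ and $|\dot{\tilde{\psi}}|\le c(1+|\tilde{r}|)$; and $|r_d|\le c(1+|v_r|+|y_{b/p}|)$ with each partial derivative of $r_d$ appearing in \eqref{COL-eq:32} likewise bounded by $c(1+|v_r|+|y_{b/p}|)$, so that \eqref{COL-eq:32} gives $|\dot{\tilde{r}}|\le c(1+\Vert\xi\Vert)$ and \eqref{COL-eq:35c}, using Assumption~\ref{COL-assum:Yur} and $|X(u_r)|\le c$, gives $|\dot{v}_r|\le c(1+\Vert\xi\Vert)$.

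It then remains to integrate: $\tfrac{d}{dt}\tfrac12\Vert\xi\Vert^2=\xi^{T}\dot{\xi}\le c\Vert\xi\Vert(1+\Vert\xi\Vert)\le c'(1+\tfrac12\Vert\xi\Vert^2)$, so by the comparison lemma $\Vert\xi(t)\Vert$ stays finite on every bounded subinterval of $[t_0,T)$; together with the boundedness of $\tilde{X}_2$ and the resulting boundedness of $\theta$, $\psi$ and the remaining states on bounded intervals, the trajectory cannot diverge as $t\to T^-$, so $T=\infty$ and the closed loop is forward complete.

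The step I expect to be the real obstacle is the second one. Equations \eqref{COL-eq:rd}--\eqref{COL-eq:32} contain many terms that look bilinear or quadratic in $\xi$ — through $\dot{\theta}$, $\dot{\hat{V}}_N$, $v_rY(u_r)v_r$ and the derivatives of $g$, all of which grow with $v_r$ — and such terms can genuinely produce finite escape. The entire difficulty is to recognise that each of them is in fact only affine in $\xi$ once one exploits the defining equation of $g$ and the uniform gap $u_{td}^2-\hat{V}_N^2\ge u_{rd}^2-4V_{\max}^2>0$ (which is precisely where Assumption~\ref{COL-assum:vel} and the observer bound of Section~\ref{COL-subsec:obs} enter) to extract the decaying factors $1/u_{td}$ and $1/u_{td}^2$ that tame the growth; without this one obtains only a quadratic bound, which does not preclude escape.
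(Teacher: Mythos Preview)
Your argument is correct and shares the paper's core strategy: obtain an affine-in-state growth bound on the right-hand side and invoke the comparison lemma. The organisation differs. The paper first treats $(\tilde{\psi},\tilde{r},v_r)$ alone via $V_1=\tfrac{1}{2}(k_2\tilde{\psi}^2+\tilde{r}^2+v_r^2)$, using Young's inequality to reach $\dot{V}_1\le\alpha V_1+\beta$, and then handles $y_{b/p}$ separately as a system driven by the already forward-complete signals, invoking the Angeli--Sontag forward-completeness criterion. Your single-stage treatment of $\xi=(y_{b/p},\tilde{\psi},\tilde{r},v_r)$ is more direct and self-contained. More substantively, the paper simply \emph{asserts} the existence of the affine bounds $|r_d|\le a_{r_d}|v_r|+b_{r_d}$ and $\|R\|\le a_R|v_r|+b_R$, whereas you actually derive them from the defining relation of $g$ and the uniform gap $u_{td}^2-\hat{V}_N^2\ge\rho\,u_{td}^2>0$; this is exactly the obstacle you flagged, and your structural facts (a)--(c) supply the justification the paper leaves to the reader. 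What the paper's two-stage split buys is a slightly sharper intermediate bound---$r_d$ affine in $v_r$ alone, with constants independent of $y_{b/p}$---which it reuses later; your bound $c(1+|v_r|+|y_{b/p}|)$ is coarser but entirely adequate for forward completeness.
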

%%%%%%%%%%%%%%%%%%%%%%%%%%%%%
The proof of this lemma is given in the Appendix%\ref{COL-app:lem1}
. The general idea is as follows. Forward completeness for \eqref{COL-eq:35b} is evident since this part of the closed-loop system consists of GES error dynamics. Using the forward completeness and in fact boundedness of \eqref{COL-eq:35b} we can show forward completeness of \eqref{COL-eq:35c}, $\dot{\tilde{\psi}}$, and $\dot{\tilde{r}}$. Hence, forward completeness of \eqref{COL-eq:clfull} depends on forward completeness of $\dot{y}_{b/p}$. To show forward completeness of $\dot{y}_{b/p}$, we consider the $y_{b/p}$ dynamics with $\tilde{X}_2$, $\tilde{\psi}$, $\tilde{r}$, and $v_r$ as input, which allows us to claim forward completeness of $\dot{y}_{b/p}$. Consequently, all the states of the closed-loop system are forward complete and hence the closed-loop system \eqref{COL-eq:clfull} is forward complete

%%%%%%%%%%%%%%%%%%%%%%%%%%%%%%%
\begin{lemma} [Boundedness near $(\tilde{X}_1, \tilde{X}_2)=0$] \label{COL-lem2}
The system \eqref{COL-eq:35c} is bounded near $(\tilde{X}_1, \tilde{X}_2)=0$ if and only if the curvature of $P$ satisfies the following condition:
%%%%%%%%%%%%%%%%%%%%%%%%%%%%%%%
\begin{equation} \label{COL-eq:curvlem2}
\kappa_{\max} \triangleq \max_{\theta \in P}\left| \kappa(\theta) \right| < \frac{ Y_{\min} }{ X_{\max} }.
\end{equation}
\end{lemma}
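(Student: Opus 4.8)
The plan is to exploit that on the set $\{(\tilde X_1,\tilde X_2)=0\}$ the vessel lies exactly on $P$, so that \eqref{COL-eq:35c} collapses to a scalar non-autonomous equation in $v_r$ and boundedness becomes a one-dimensional question. First I would check that $\{(\tilde X_1,\tilde X_2)=0\}$ is invariant for \eqref{COL-eq:clfull}: every right-hand side in \eqref{COL-eq:35a}--\eqref{COL-eq:35b} vanishes there, using $G_1(0,\cdot)=0$ from \eqref{COL-eq:Gbnd} and $\hat V_T=V_T$, $\hat V_N=V_N$ when $\tilde x=\tilde y=\tilde V_x=\tilde V_y=0$. On this set \eqref{COL-eq:35c} reads $\dot v_r = X(u_{rd})\,r_d^{0}(t,v_r) + Y(u_{rd})\,v_r$, where $r_d^{0}$ is $r_d$ from \eqref{COL-eq:rd} evaluated at $x_{b/p}=y_{b/p}=\tilde\psi=\tilde u=\tilde x=\tilde y=0$ (so also $\dot{\hat V}_x=\dot{\hat V}_y=0$). ``Bounded near $(\tilde X_1,\tilde X_2)=0$'' then means this scalar system has uniformly bounded solutions, the conclusion extending to a neighbourhood by continuity of the vector field in $(\tilde X_1,\tilde X_2)$.

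The substance of the proof is the growth rate of $r_d^{0}$ in $v_r$, and this is the step I expect to be the main obstacle. Substituting the equilibrium values into \eqref{COL-eq:rd} one obtains, with $u_{td}=\sqrt{u_{rd}^{2}+v_r^{2}}$, that $g=\Delta V_N/\sqrt{u_{td}^{2}-V_N^{2}}\to 0$ and $C_r\to 1$ as $|v_r|\to\infty$ (both well defined thanks to $\|\hat V_N\|<u_{rd}$, i.e.\ Assumption \ref{COL-assum:vel}), that $\partial g/\partial a$, $\partial g/\partial b$, $\partial g/\partial c$ vanish at rate $\mathcal{O}(|v_r|^{-1})$ or faster, that $\dot\theta|_{0}=\sqrt{u_{td}^{2}-V_N^{2}}+V_T=|v_r|+\mathcal{O}(1)$, and that $\dot{\hat V}_N|_{0}=-\kappa(\theta)\dot\theta|_{0}V_T=\mathcal{O}(|v_r|)$. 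The delicate part is then an order-of-growth bookkeeping through \eqref{COL-eq:rd}: one must verify that every term is bounded uniformly in $v_r$ over compact time intervals \emph{except} the leading one $-\kappa(\theta)\dot\theta|_{0}/C_r$, so that the potentially unbounded pieces — the $\dot{\hat V}_N$ terms and the quadratic $-2v_r Y(u_r)v_r$ inside the $\partial g/\partial a$ bracket — are absorbed by the decaying factors $\Delta/(\Delta^{2}+(y_{b/p}+g)^{2})$ and $\partial g/\partial(\cdot)$. This yields $r_d^{0}=-\kappa(\theta)\,|v_r| + o(|v_r|)$ as $|v_r|\to\infty$; it is exactly the coupling between the total speed $u_{td}$ and the sway dynamics that earlier works overlook, and it is routine but long.

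Given this estimate, I would take $\mathcal W=\tfrac12 v_r^{2}$ and compute $\dot{\mathcal W}=X(u_{rd})v_r r_d^{0}+Y(u_{rd})v_r^{2}=-X(u_{rd})\kappa(\theta)\,\mathrm{sign}(v_r)\,v_r^{2}+Y(u_{rd})v_r^{2}+o(v_r^{2})$. Using $|X(u_{rd})|\le X_{\max}$, $|\kappa(\theta)|\le\kappa_{\max}$ and Assumption \ref{COL-assum:Yur} ($Y(u_{rd})\le-Y_{\min}$) gives $\dot{\mathcal W}\le(\kappa_{\max}X_{\max}-Y_{\min})v_r^{2}+o(v_r^{2})$, so if $\kappa_{\max}<Y_{\min}/X_{\max}$ there is an $M$ with $\dot{\mathcal W}<0$ for $|v_r|>M$, whence $v_r$ remains in a ball; the same estimate survives a sufficiently small $(\tilde X_1,\tilde X_2)$, giving boundedness near the origin. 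For the converse, if $\kappa_{\max}\ge Y_{\min}/X_{\max}$ I would evaluate the same $\dot{\mathcal W}$ at a configuration with $|\kappa(\theta)|=\kappa_{\max}$ and with $v_r$ of the sign making $-X(u_{rd})\kappa(\theta)\,\mathrm{sign}(v_r)=|X(u_{rd})|\kappa_{\max}$, at instants where $|X(u_{rd})|$ is near $X_{\max}$ and $|Y(u_{rd})|$ near $Y_{\min}$, obtaining $\dot{\mathcal W}\ge(\kappa_{\max}X_{\max}-Y_{\min})v_r^{2}+o(v_r^{2})\ge 0$ for large $|v_r|$, so no uniform bound on $v_r$ can exist. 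Making the necessity direction fully rigorous requires the worst case to be actually realised along $P$ (e.g.\ along a constant-curvature arc of curvature $\kappa_{\max}$) and pinning down the precise notion of ``bounded near $0$''; this is where I expect the appendix argument to need the most care.
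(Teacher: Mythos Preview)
Your proposal is correct and follows essentially the same route as the paper: both use $\mathcal W=\tfrac12 v_r^2$, work on (or near) the manifold $(\tilde X_1,\tilde X_2)=0$, identify $-\kappa(\theta)\dot\theta/C_r$ as the leading piece of $r_d$, and conclude that the dominant contribution to $\dot{\mathcal W}$ for large $|v_r|$ has sign governed by $X_{\max}\kappa_{\max}-Y_{\min}$. The one technical difference worth noting is that where you do asymptotic bookkeeping to argue the $Y(u_r)$-dependent pieces of $r_d$ are $o(|v_r|)$, the paper instead uses the exact identity $C_r-1=X(u_r)\bigl(\tfrac{u_{rd}}{u_{rd}^2+v_r^2}-\tfrac{2\Delta v_r}{\Delta^2+(y_{b/p}+g)^2}\tfrac{\partial g}{\partial a}\bigr)$ to fold those pieces into the main $Y(u_{rd})v_r^2$ term, arriving at the bound $\tfrac{1}{C_r^*}(X_{\max}|\kappa(\theta)|-Y_{\min})v_r^2$ without any asymptotic expansion; the paper also works off the manifold from the start with explicit remainder functions $F_1,F_2$ (with $F_2$ vanishing at $(\tilde X_1,\tilde X_2)=0$) rather than invoking continuity after the fact. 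On necessity, the appendix is as informal as you anticipated: it simply asserts the condition is also necessary without constructing an unbounded trajectory.
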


The proof of this lemma is given in the Appendix%\ref{COL-app:lem2}
. A sketch of the proof is as follows. The sway velocity dynamics \eqref{COL-eq:35c} are analysed using a quadratic Lyapunov function $V=1/2v^2_r$. It can be shown that the derivative of this Lyapunov function satisfies the conditions for boundedness when the solutions are on or close to the manifold where $(\tilde{X}_1, \tilde{X}_2)=0$. Consequently, \eqref{COL-eq:35c} satisfies the conditions of boundedness near $(\tilde{X}_1,\tilde{X}_2)=0$ as long as \eqref{COL-eq:curvlem2} is satisfied.

In Lemma \ref{COL-lem2} we show boundedness of $v_r$ for small values of $(\tilde{X}_1, \tilde{X}_2)$ to derive the bound on the curvature. However, locality with respect to $\tilde{X}_1$, i.e. the path-following errors and yaw angle and yaw rate errors, is not desired and in the next lemma boundedness independent of $\tilde{X}_1$ is shown under an extra condition on the look-ahead distance $\Delta$.

\begin{lemma} [Boundedness near $\tilde{X}_2=0$] \label{COL-lem3}
If the following additional assumption is satisfied:
\begin{equation}
\exists ~\sigma > 0 ~ \mathrm{s.t.} \quad 1 - \kappa(\theta) y_{b/p} \geq \sigma > 0 \quad \wedge \quad \left[ Y_{\min} - X_{\max} \kappa_{\max} \frac{ 1 }{ \sigma } \right] > 0 \label{COL-eq:sigma}
\end{equation}
the system \eqref{COL-eq:35c} is bounded only near $ \tilde{X}_2 = 0$ if we have
\begin{gather} 
\Delta > \frac{ 4 X_{\max} }{ \left[ Y_{\min} - X_{\max} \kappa_{\max} \frac{ 1 }{ \sigma } \right] } \label{COL-eq:Delta} \\
\kappa_{\max} < \sigma \frac{ Y_{\min}}{X_{\max} } \label{COL-eq:curvlem3}
\end{gather}
\end{lemma}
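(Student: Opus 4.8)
The plan is to analyse the sway dynamics \eqref{COL-eq:35c} through the quadratic Lyapunov function $V=\tfrac12 v_r^2$ and to show that, once $\Delta$ is large enough, $\dot V$ is strictly negative as soon as $|v_r|$ exceeds a threshold that may depend on a bound for $\tilde{X}_1$ but does \emph{not} require $\tilde{X}_1$ to be small, whenever $\tilde{X}_2$ stays in a sufficiently small neighbourhood of the origin. Differentiating along \eqref{COL-eq:35c} gives $\dot V = Y(u_r)v_r^2 + X(u_r)\tilde{r}v_r + X(u_r)r_d v_r$. By Assumption \ref{COL-assum:Yur} and the smallness of $\tilde{u}$ near $\tilde{X}_2=0$, the first term is bounded above by $-(Y_{\min}-O(\Vert\tilde{X}_2\Vert))v_r^2$; since $|X(u_r)|\le X_{\max}$ and $\tilde{r}$ is bounded (Condition \ref{COL-cond:curv}--\ref{COL-cond:Cr} being standing), the second term is a bounded-times-$|v_r|$ perturbation, absorbed by Young's inequality into a small fraction of $v_r^2$ plus a constant. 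The whole difficulty is therefore to control the $v_r$-growth of $r_d$ in \eqref{COL-eq:rd}.

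I would split $r_d$ into three groups. \emph{(i)} The path-curvature feed-forward $\kappa(\theta)\dot{\theta}/C_r$: using the identity $u_t\cos(\psi+\beta-\gamma_p(\theta)) = u_r\cos(\psi-\gamma_p(\theta)) - v_r\sin(\psi-\gamma_p(\theta))$ together with $1-\kappa(\theta)y_{b/p}\ge\sigma$ from \eqref{COL-eq:sigma} and the lower bound on $|C_r|$ from Condition \ref{COL-cond:Cr}, this contributes at most $\tfrac{\kappa_{\max}}{|C_r|\,\sigma}|v_r|$ plus a term bounded uniformly in $v_r$. \emph{(ii)} The look-ahead and $G_1$ terms: each piece here carries the prefactor $\tfrac{\Delta}{\Delta^2+(y_{b/p}+g)^2}\le\tfrac1\Delta$, while $G_1$ from \eqref{COL-eqeq:G} grows at most linearly in $v_r$ (through $u_{td}$) with a coefficient that is bounded in $\tilde{\psi}$ (via $1-\cos\tilde{\psi}$ and $\sin\tilde{\psi}$) and whose $x_{b/p}$-, $\tilde{u}$- and $\dot{\gamma}_p(\theta)$-dependent parts vanish with $\Vert\tilde{X}_2\Vert$; hence this group contributes at most $\tfrac{c_1}{\Delta}|v_r| + O(\Vert\tilde{X}_2\Vert)(1+|v_r|) + (\text{bounded in }v_r)$, with $c_1$ the constant producing the factor $4$ in \eqref{COL-eq:Delta} after multiplication by $X_{\max}$. \emph{(iii)} The remaining terms: I would first show that $g$ stays bounded and in fact $g\to0$ as $|v_r|\to\infty$, and that $\partial g/\partial a$, $\partial g/\partial b$, $\partial g/\partial c$ are bounded uniformly in $v_r$ (decaying at infinity), so that products such as $\tfrac{\partial g}{\partial a}v_r Y(u_r)v_r$, $\tfrac{\partial g}{\partial b}y_{b/p}\dot{\hat{V}}_N$ and the rational term $(Y(u_r)v_r u_{rd}-\dot{u}_{rd}v_r)/(u_{rd}^2+v_r^2)$ all stay bounded in $v_r$ (and small near $\tilde{X}_2=0$ whenever they carry a factor $\dot{\hat{V}}_N$, $\tilde{x}$ or $\tilde{y}$). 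Combining the three groups, $|r_d|\le\big(\tfrac{\kappa_{\max}}{|C_r|\sigma}+\tfrac{c_1}{\Delta}+O(\Vert\tilde{X}_2\Vert)\big)|v_r|+(\text{bounded in }v_r)$.

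Substituting back, $\dot V\le-\big[Y_{\min}-\tfrac{X_{\max}\kappa_{\max}}{|C_r|\sigma}-\tfrac{X_{\max}c_1}{\Delta}-O(\Vert\tilde{X}_2\Vert)\big]v_r^2+(\text{bounded})|v_r|+(\text{bounded})$. Since the two correction terms in Condition \ref{COL-cond:Cr} vanish as $|v_r|\to\infty$, we have $C_r\to1$, so for $|v_r|$ beyond some threshold $R$ the bracket is at least $Y_{\min}-\tfrac{X_{\max}\kappa_{\max}}{\sigma}-\tfrac{X_{\max}c_1}{\Delta}-\eta$ with $\eta$ as small as desired; \eqref{COL-eq:curvlem3} (equivalently the second part of \eqref{COL-eq:sigma}) makes $Y_{\min}-X_{\max}\kappa_{\max}/\sigma>0$, \eqref{COL-eq:Delta} makes $X_{\max}c_1/\Delta$ strictly below that gap, and shrinking the neighbourhood of $\tilde{X}_2=0$ makes the $O(\Vert\tilde{X}_2\Vert)$ contribution negligible, so the bracket is some $\rho>0$. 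Hence $\dot V<0$ whenever $|v_r|\ge R'$ for $R'$ large enough, which, together with the forward completeness established in Lemma \ref{COL-lem1}, yields boundedness of $v_r$ near $\tilde{X}_2=0$ — with locality only in $\tilde{X}_2$, not in $\tilde{X}_1$, which is exactly the improvement over Lemma \ref{COL-lem2}. I expect the main obstacle to be group \emph{(iii)}: honestly tracking the $v_r$-growth of the many fractional terms of \eqref{COL-eq:rd} and of $\partial g/\partial\{a,b,c\}$, verifying that none grows faster than linearly in $v_r$ and that, where a term multiplies a factor vanishing at $\tilde{X}_2=0$, the product is genuinely $O(\Vert\tilde{X}_2\Vert)$ times at most $(1+|v_r|)$. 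The presence of $C_r$ in the denominator is a secondary technical point, handled through its limit $C_r\to1$ for large $|v_r|$ and its uniform lower bound from Condition \ref{COL-cond:Cr} for moderate $|v_r|$, where $v_r$ lies in a compact set and nothing escapes.
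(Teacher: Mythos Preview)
Your approach is essentially the paper's: the same quadratic Lyapunov function $V=\tfrac12 v_r^2$, the same programme of bounding $r_d$ linearly in $|v_r|$, and the same final inequality $Y_{\min}-X_{\max}\kappa_{\max}/\sigma-4X_{\max}/\Delta>0$ that yields \eqref{COL-eq:Delta}. Your three-group decomposition matches the paper's splitting into the curvature feed-forward, the $\phi_2$-weighted look-ahead/$G_1$ block (which produces exactly the factor $4$), and a remainder collected into functions $\Phi_1,\Phi_2,\Phi_3$ that are bounded in $v_r$ or vanish with $\tilde{X}_2$.

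The one genuine technical difference is how you dispose of $C_r$. The paper exploits the algebraic identity
\[
C_r-1 \;=\; X(u_r)\!\left(\frac{u_{rd}}{u_{rd}^2+v_r^2}-\frac{2\Delta v_r}{\Delta^2+(y_{b/p}+g)^2}\,\frac{\partial g}{\partial a}\right),
\]
which lets it merge the $Y(u_r)v_r u_{rd}/(u_{rd}^2+v_r^2)$ and $\partial g/\partial a\,(-2v_rY(u_r)v_r)$ pieces of $r_d$ with the main damping $Y(u_{rd})v_r^2$, producing $\tfrac{1}{C_r^\ast}Y(u_{rd})v_r^2$; the whole quadratic coefficient then factors as $\tfrac{1}{C_r^\ast}[\,\cdot\,]$, so the sign condition is independent of $C_r^\ast$ for \emph{all} $v_r$, not only asymptotically. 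You instead put those two pieces in group~(iii), bound them directly (correctly: both are $O(1)$ in $v_r$), leave the main damping at $-Y_{\min}v_r^2$, and then invoke $C_r\to 1$ as $|v_r|\to\infty$ to remove the $1/|C_r|$ factor from the destabilising terms. This is legitimate because $C_r$ depends only on $u_r,u_{rd},v_r,y_{b/p},\hat V_N$, all bounded independently of $\tilde{\psi},\tilde{r}$, so the limit is uniform; it just yields a slightly less explicit threshold for $|v_r|$. The paper's identity is structurally cleaner, yours is more elementary---both arrive at the same condition.
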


\begin{remark} \label{COL-rem:sig}
The size of $\sigma$ can be calculated by using the following tuning procedure. 
\begin{enumerate}
\item Start by calculating the absolute bound on the curvature from Lemma \ref{COL-lem2}. This is a bound that is necessary for feasibility of the trajectories. 
\item Now choose a positive $\Delta$ and using the maximum curvature of the path, solve \eqref{COL-eq:Delta} to obtain a possible value for $\sigma$.
\item Using the value for $\sigma$ obtained in the previous step and the maximum value of the curvature we can use the inequality $1-\kappa(\theta)y_{b/p}\geq \sigma$ from \eqref{COL-eq:sigma} to calculate the size of the tube as
\begin{equation}
y^{\mrm{tube}}_{b/p} = \frac{1-\sigma}{\kappa_{\max}}.
\end{equation}
\end{enumerate}
If initial conditions are within the tube $y^{\mrm{tube}}_{b/p}$, and are chosen such that the transient caused by the unknown current does not force the vessel out of the tube. Then the sway velocity is bounded for all time. Note that the choice of $\Delta$ in step two given above determines how large the tube will be. More specifically, a larger choice for $\Delta$ will result in a smaller value for $\sigma$ which will lead to a larger tube in step three. However, due to the nature of the guidance a larger $\Delta$ will mean slower steering and consequently slower convergence to the path.
\end{remark}

The proof of Lemma \ref{COL-lem3} is given in the Appendix%\ref{COL-app:lem3}
, the general idea is given as follows. The proof follows along the same lines of that of Lemma \ref{COL-lem2} but solutions are considered close to the manifold $\tilde{X}_2 = 0$ rather than $(\tilde{X}_1, \tilde{X}_2)=0$. It is shown that boundedness can still be shown if \eqref{COL-eq:Delta} is satisfied additionally to the conditions of Lemma \ref{COL-lem2}. 

\begin{theorem}
Consider a $\theta$-parametrised path denoted by $ P(\theta) \triangleq (x_p(\theta), y_p(\theta))$. Then under Conditions \ref{COL-cond:curv}-\ref{COL-cond:Cr} and the conditions of Lemma \ref{COL-lem1}-\ref{COL-lem3}, the system \eqref{COL-eq:dynsys} with control laws \eqref{COL-eq:tauu} and \eqref{COL-eq:tau_r} and observer \eqref{COL-eq:obs} follows the path $P$, while maintaining $v_r$, $\tau_r$ and $\tau_u$ bounded. In particular, the origin of the system \eqref{COL-eq:35a}-\eqref{COL-eq:35b} is exponentially stable in the tube. 
\end{theorem}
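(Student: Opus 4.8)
The plan is to read the full closed-loop system \eqref{COL-eq:clfull} as a cascade: $\dot{\tilde{X}}_2$ in \eqref{COL-eq:35b} is the autonomous driving subsystem, $\dot{\tilde{X}}_1$ in \eqref{COL-eq:35a} is the driven subsystem whose perturbation vanishes together with $\tilde{X}_2$, and the sway equation \eqref{COL-eq:35c} enters only through coefficients that will be shown to be bounded. The first and most delicate step is to secure a uniform bound on $v_r$. Lemma \ref{COL-lem1} rules out finite escape; Lemma \ref{COL-lem2} shows, under \eqref{COL-eq:curvlem2}, that $v_r$ is bounded whenever $(\tilde{X}_1,\tilde{X}_2)$ is small; Lemma \ref{COL-lem3} strengthens this, under \eqref{COL-eq:Delta}--\eqref{COL-eq:curvlem3}, to boundedness of $v_r$ whenever $\tilde{X}_2$ is small with $\tilde{X}_1$ anywhere in the tube; and Lemma \ref{COL-lemx} gives GES of \eqref{COL-eq:35b}. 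Since $\tilde{X}_2(t)\to0$ exponentially, the trajectory enters in finite time the region where Lemma \ref{COL-lem3} applies, and --- choosing the initial condition inside the tube $y^{\mathrm{tube}}_{b/p}$ of Remark \ref{COL-rem:sig} with a current transient small enough not to leave it --- $v_r$ stays bounded and the trajectory stays in the tube for all $t\geq t_0$.

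With $v_r$ bounded, every signal built from it is bounded: $u_{td}=\sqrt{u_{rd}^2+v_r^2}$ is bounded and bounded below by a positive constant, $\hat{V}_T$ and $\hat{V}_N$ are bounded (Section \ref{COL-subsec:obs}), hence $g$, the quantities $a,b,c$, the partials $\partial g/\partial(\cdot)$, the gain $\zeta$ of \eqref{COL-eq:Gbnd}, $r_d$ of \eqref{COL-eq:rd} and all partials of $r_d$ are bounded, being smooth fractionals bounded in the error variables or periodic in linear arguments. This makes $\tau_u$ of \eqref{COL-eq:tauu} and $\tau_r$ of \eqref{COL-eq:tau_r} bounded, and it certifies that every perturbation term in \eqref{COL-eq:35a}--\eqref{COL-eq:35b} --- each linear in the components $\tilde{V}_T,\tilde{V}_N,\tilde{\bsym{V}}_c$ of $\tilde{X}_2$, plus the $\tilde{u}$- and $x_{b/p}$-dependence entering $G_1(\cdot)$ --- has the at-most-linear growth in $\tilde{X}_1$ with $\tilde{X}_2$-vanishing coefficients required by \citet[Theorem~2]{panteley1998global}.

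Next I would establish exponential stability, in the tube, of the nominal $\tilde{X}_1$-subsystem, i.e. \eqref{COL-eq:35a} with $\tilde{X}_2\equiv0$, which is itself a cascade. Its $(\tilde{\psi},\tilde{r})$ part is the linear time-varying system $\dot{\tilde{\psi}}=C_r\tilde{r}$, $\dot{\tilde{r}}=-k_1\tilde{r}-k_2C_r\tilde{\psi}$; since $C_r$ is bounded and bounded away from zero by Condition \ref{COL-cond:Cr}, the function $\tfrac12 k_2\tilde{\psi}^2+\tfrac12\tilde{r}^2+\varepsilon\tilde{\psi}\tilde{r}$ is a strict Lyapunov function for small $\varepsilon>0$, so $(\tilde{\psi},\tilde{r})\to0$ exponentially. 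For $y_{b/p}$, the nominal dynamics $\dot{y}_{b/p}=-u_{td}\,y_{b/p}/\sqrt{\Delta^2+(y_{b/p}+g)^2}$ satisfy $\dot{V}_y\leq -c_y y_{b/p}^2$ along $V_y=\tfrac12 y_{b/p}^2$ for some $c_y>0$ in the tube (there the denominator is bounded and no smaller than $\Delta$, and $u_{td}$ is bounded below), while the residual $G_1(\cdot)$ is linearly bounded in $\tilde{\psi}$ by \eqref{COL-eq:Gbnd}, so $y_{b/p}$ is ISS with respect to the exponentially decaying input $\tilde{\psi}$. Invoking the cascade argument of Lemma \ref{COL-lemx} (via \citet[Theorem~2]{panteley1998global} and \citet[Proposition~2.3]{loria2004cascaded}) with $\tilde{X}_2$ GES, the nominal $\tilde{X}_1$-system exponentially stable in the tube, and the interconnection bounded as above, yields exponential stability of the origin of \eqref{COL-eq:35a}--\eqref{COL-eq:35b} in the tube; in particular $x_{b/p},y_{b/p}\to0$ exponentially --- so the vessel follows $P$ --- and $\tilde{u}\to0$, i.e. $u_r\to u_{rd}$. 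Together with the boundedness of $v_r$, $\tau_u$, $\tau_r$ this is the theorem.

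The main obstacle is the first step: Lemmas \ref{COL-lem2}--\ref{COL-lem3} deliver only \emph{conditional} boundedness of $v_r$, so the genuine work is the transient bookkeeping showing the trajectory reaches and then never leaves the set on which Lemma \ref{COL-lem3} bites while also never leaving the tube; this couples the tube size (Remark \ref{COL-rem:sig}), the magnitude of the current-induced transient, and the growth of $v_r$, and is precisely why the stability conclusion is only local. A related point is that the nominal $\tilde{X}_1$-system is exponentially stable only in the tube, so the cascade theorem must be used in its local form, again constraining how large the $\tilde{X}_2$-driven perturbation --- equivalently the initial current estimation error --- may be.
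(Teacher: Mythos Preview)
Your proposal is correct and follows the same overall cascade architecture as the paper's proof: first, combine GES of $\tilde{X}_2$, forward completeness (Lemma~\ref{COL-lem1}), and Lemma~\ref{COL-lem3} to secure a uniform bound on $v_r$ after a finite transient; then treat $(\tilde{\psi},\tilde{r})$ as a nominal system perturbed by $\tilde{X}_2$; then treat $y_{b/p}$ as driven by the previous cascade; finally invoke the Panteley--Lor\'{\i}a cascade results.

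The one genuine methodological difference is in the stability argument for the nominal $(\tilde{\psi},\tilde{r})$ subsystem. The paper uses the non-strict Lyapunov function $V_{(\tilde{r},\tilde{\psi})}=\tfrac12\tilde{r}^2+\tfrac12 k_2\tilde{\psi}^2$, obtains $\dot V_{(\tilde{r},\tilde{\psi})}=-k_1\tilde{r}^2$, and then appeals to Barbalat's lemma together with the lower/upper bounds on $C_r$ (persistence of excitation) to conclude $\tilde{\psi}\to0$, declaring the nominal system GAS before applying the cascade theorem. You instead add a small cross term $\varepsilon\tilde{\psi}\tilde{r}$ to produce a strict Lyapunov function and read off exponential stability directly. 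Your route is shorter and, since the subsequent cascade step really wants (local) exponential stability of the nominal block rather than mere GAS, it is arguably cleaner: it delivers exactly the property that \citet[Theorem~2]{panteley1998exponential} needs, whereas the paper's Barbalat argument yields GAS and leaves the exponential rate implicit. Both approaches are valid here because $C_r$ is bounded and bounded away from zero; your argument simply exploits that fact earlier.

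A minor presentational point: the paper splits the $\tilde{X}_1$ analysis into two successive cascade applications (first $(\tilde{\psi},\tilde{r})$ with $\tilde{X}_2$, then $y_{b/p}$ with the result), while you phrase the $y_{b/p}$ step as ISS with respect to the exponentially decaying input $\tilde{\psi}$. These are equivalent here, and your explicit accounting of the boundedness of $\tau_u$ and $\tau_r$ is something the paper asserts in the theorem statement but does not spell out in the proof.
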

\begin{proof}
From the fact that the origin of \eqref{COL-eq:35b} is GES, the fact that the closed-loop system \eqref{COL-eq:clfull} is forward complete according to Lemma \ref{COL-lem1}, and the fact that solutions of \eqref{COL-eq:35c} are locally bounded near $\tilde{X}_2 = 0$ according to Lemma \ref{COL-lem3}, we can conclude that there is a finite time $T > t$ after which solutions of \eqref{COL-eq:35b} will be sufficiently close to $\tilde{X}_2 = 0$ to guarantee boundedness of $v_r$. Having established that $v_r$ is bounded we first analyse the cascade 
\begin{subequations} \label{COL-eq:clstab}
\begin{align} 
\begin{bmatrix}\begin{smallmatrix} \dot{\tilde{\psi}} \\  \dot{\tilde{r}} 
\end{smallmatrix}\end{bmatrix}  &=  
\begin{bmatrix}\begin{smallmatrix} C_r \tilde{r} \\ - k_1 \tilde{r} - k_2 C_r \tilde{\psi} 
\end{smallmatrix}\end{bmatrix} + \notag \\
&\quad \begin{bmatrix}\begin{smallmatrix} G_2(\cdot)  \\ - \tfrac{\partial r_d}{ \partial \tilde{\psi}} G_2(\cdot) - \tfrac{\partial r_d}{\partial \bsym{p}_{b/p}} [\tilde{V}_{T},\tilde{V}_N]^T + \tfrac{\partial r_d}{\partial [\tilde{x},\tilde{y}]^T} \tilde{\bsym{V}}_c  \end{smallmatrix}\end{bmatrix} \label{COL-eq:clstaba}\\
\begin{bmatrix}\begin{smallmatrix}
\dot{x}_{b/p} \\  \dot{\tilde{x}} \\ \dot{\tilde{y}} \\ \dot{ \tilde{V}}_x \\ \dot{ \tilde{V}}_y \\ \dot{\tilde{u}} 
\end{smallmatrix}\end{bmatrix}  &=  
\begin{bmatrix}\begin{smallmatrix} - k_\delta x_{b/p}   + \tilde{V}_{T}  \\  -k_{x} \tilde{x} - \tilde{V}_{x} \\  -k_{y} \tilde{y} - \tilde{V}_{y} \\  - k_{x1} \tilde{x} \\ - k_{y1} \tilde{y} \\ -k_u \tilde{u} 
\end{smallmatrix}\end{bmatrix}  \label{COL-eq:clstabb}
\end{align} 
\end{subequations}
The perturbing system \eqref{COL-eq:clstabb} is GES as shown in Lemma \ref{COL-lemx}. The interconnection term, i.e. the second and third term in \eqref{COL-eq:clstaba}, satisfies the linear growth criteria from \cite[Theorem 2]{panteley1998exponential}. More specifically, it has an upperbound that does not grow with $\tilde{\psi}$ and $\tilde{r}$ since all the partial derivatives of $r_d$ and $g$ can be bounded by constants. The nominal dynamics, i.e. the first matrix in \eqref{COL-eq:clstaba}, can be analysed with the following quadratic Lyapunov function
$%\begin{equation} \label{COL-eq:V1thm}
V_{(\tilde{r},\tilde{\psi})} = \tfrac{1}{2}\tilde{r}^2 + \tfrac{1}{2}k_2\tilde{\psi}^2
$,%\end{equation}
whose derivative along the solutions of the nominal system is given by
\begin{equation} \label{COL-eq:dV1thm}
\dot{V}_{(\tilde{r},\tilde{\psi})} = - k_1\tilde{r}^2 -k_2C_r\tilde{\psi}\tilde{r} + k_2C_r\tilde{r}\tilde{\psi} = -k_2 \tilde{r}^2 \leq 0
\end{equation}
which implies that $\tilde{r}$ and $\tilde{\psi}$ are bounded. The derivative of \eqref{COL-eq:dV1thm} is given by
\begin{equation}
\ddot{V}_{(\tilde{r},\tilde{\psi})} = -2k^2_1 \tilde{r}^2 - 2k_1k_2C_r \tilde{\psi}\tilde{r}
\end{equation}
which is bounded since $\tilde{r}$ and $\tilde{\psi}$ are bounded. This implies that \eqref{COL-eq:dV1thm} is a uniformly continuous function. Consequently, from Barbalat's lemma (\citet[Lemma~8.2]{khalil2002nonlinear}) we have that
\begin{equation}
\lim_{t\rightarrow\infty} \dot{V}_{(\tilde{r},\tilde{\psi})} = \lim_{t\rightarrow\infty} -k_1\tilde{r}^2 = 0~\Rightarrow~\lim_{t \rightarrow \infty} \tilde{r} = 0.
\end{equation}
Since $C_r$ is persistently exciting, which follows from the fact that $C_r$ is upper bounded and lower bounded by positive constants, it follows from the expression of the nominal dynamics that
\begin{equation}
\lim_{t \rightarrow \infty} \tilde{r} = 0~\Rightarrow~\lim_{t \rightarrow \infty} \tilde{\psi} = 0.
\end{equation}
This implies that the system is globally asymptotically stable. Consequently, from the above it follows that the cascade \eqref{COL-eq:clstab} is GES \cite[Theorem 2]{panteley1998exponential}.

We now consider the following dynamics
\begin{equation} \label{COL-eq:clstab2}
\dot{y}_{b/p} =  - u_{td}   \tfrac{ y_{b/p} }{ \sqrt{\Delta^2 + (y_{b/p} + g)^2} } + \tilde{V}_{N} + G_1(\cdot).
\end{equation}
Note that we can view the systems \eqref{COL-eq:clstab} and \eqref{COL-eq:clstab2} as a cascaded system where the nominal dynamics are formed by the first term of \eqref{COL-eq:clstab2}, the interconnection term is given by the second and third terms of \eqref{COL-eq:clstab2}, and the perturbing dynamics are given by \eqref{COL-eq:clstab}. As we have just shown, the perturbing dynamics are GES. Using the bound on $G_1(\cdot)$ from \eqref{COL-eq:Gbnd} it is straightforward to verify that the interconnection term satisfies the conditions of \cite[Theorem 2]{panteley1998exponential}. We now consider the following Lyapunov function for the nominal system
$
V_{y_{b/p}} = 1/2y^2_{b/p},
$
whose derivative along the solutions of the nominal system is given by
\begin{equation}
\dot{V}_{y_{b/p}} = - u_{td}   \tfrac{ y^2_{b/p} }{ \sqrt{\Delta^2 + (y_{b/p} + g)^2} } \leq 0,
\end{equation}
which implies that the nominal system is GAS. Moreover, since it is straightforward to verify that $\dot{V}_{y_{b/p}} \leq \alpha V_{y_{b/p}}$ for some constant $\alpha$ dependent on the initial conditions, it follows from the comparison lemma (\citet[Lemma 3.4]{khalil2002nonlinear}) that the nominal dynamics are also LES. Consequently, the cascaded system satisfies the conditions of \citet[Theorem~2]{panteley1998global} and \citet[Lemma~8]{panteley1998exponential}, and therefore the cascaded system is GAS and LES. This implies that the origin of the error dynamics, i.e. $(\tilde{X}_1,\tilde{X}_2)=(0,0)$, is globally asymptotically stable and locally exponentially stable. However, since the parametrisation is only valid locally we can only claim exponential stability in the tube.
\end{proof} 

\section{Case Study} \label{COL-sec:case}
This section presents a case study to verify the theoretical results presented in this paper. The case study under consideration is following of a circular path using the model of an underactuated surface vessel from \citet{fredriksen2004global}. The ocean current components are given by $V_x = -1~\munit{m/s}$ and $V_y = 1.2~\munit{m/s}$ and consequently $V_{\max} \approx 1.562~\munit{m/s}$. The desired relative surge velocity is chosen to be constant and set to $u_{rd}= 5~\munit{m/s}$ such that Assumption \ref{COL-assum:vel} is verified. Using the ship's model parameters from \citet{fredriksen2004global} and the expressions \eqref{COL-eq:Xur} and \eqref{COL-eq:Yur} it is straightforward to see that the curvature bound from Lemma \ref{COL-lem2} is given by $\kappa_{\max} < (Y_{\min})/(X_{\max}) \approx 0.1333$. The observer is initialised as suggested in Subsection \ref{COL-subsec:obs} and the observer gains are selected as $k_{x_1} = k_{y_1} = 1$ and $k_{x_1} = k_{y_1} = 0.1$. The controller gains are selected as $k_{u_r} = 0.1$ for the surge velocity controller and $k_{1} = 1000$ and $k_2 = 400$ for the yaw rate controller.

In this case study the vessel is required to follow a circle with a radius of $400~\munit{m}$ that is centred around the origin. Consequently, the curvature of the path is given by $\kappa_p = 1/400 = 0.0025$. To choose the parameters of the guidance law we will now follow the tuning procedure lined out in Remark \ref{COL-rem:sig}. In the first step we verify that the feasibility constraint on the curvature is satisfied for the path under consideration, which is clearly the case since $\kappa_p < (Y_{\min})/(X_{\max}) \approx 0.133$. In the second step we fix our $\Delta$ as $\Delta = 40~\munit{m}$, which results in $\sigma \approx 0.0268$. In the third step we use the value for $\sigma$ to calculate the size of the tube as $y^{\mrm{tube}}_{b/p} \approx 369.983~\munit{m}$. Note that this is only slightly smaller then the size of the tube where the parametrisation is valid, i.e. $400~\munit{m}$.  To stay within this tube we choose the initial conditions as
\begin{equation}
[u_r(t_0),v_r(t_0),r(t_0),x(t_0),y(t_0),\psi(t_0)]^T = [0,0,0,700,10,\pi/2]^T.
\end{equation}

The resulting trajectory for the vessel can be seen in Figure \ref{COL-fig:Loc_path}. The blue dashed line is the trajectory of the vessel and the red circle is the reference path. The yellow vessels represent the orientation of the vessel at certain instances. From the plot in Figure \ref{COL-fig:Loc_path} it can be seen that the vessel converges to the circle and starts to follow the path. Moreover, it can be seen from the yellow vessels that the orientation of the ship is not tangential to the circle which is necessary to compensate for the ocean current.

\begin{figure}[tbh]
\centering
\includegraphics[width=.75\columnwidth]{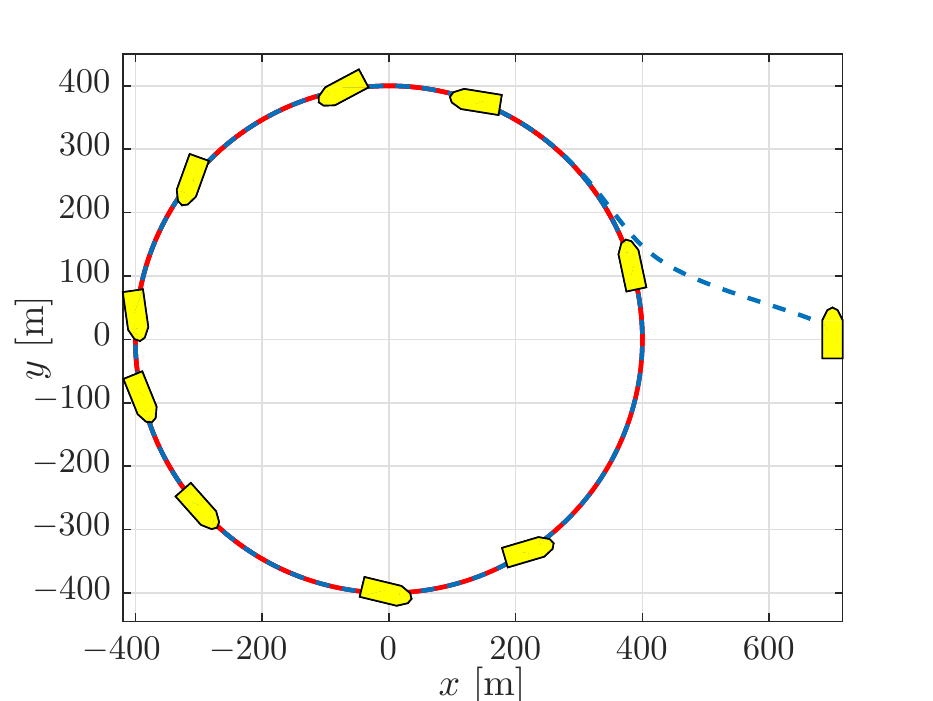}{}
\caption[Path of the vessel in the $x-y$-plane]{Path of the vessel in the $x-y$-plane. The dashed blue line is the trajectory of the path and the red line is the reference. The yellow ships denote the orientation of the vessel at certain times.}\label{COL-fig:Loc_path}
\end{figure}

The path-following errors can be seen in the top plot of Figure \ref{COL-fig:Loc_sub_cp} which confirm that the path-following errors converge to zero. A detail of the steady-state is given to show the reduction of the error. Moreover, note that because of the choice of parametrisation the error in tangential direction $x_{b/p}$ is zero throughout the motion except from a very small transient at the beginning caused by the transient of the observer. The estimates obtained from the ocean current observer can be seen in the second plot from the top in Figure \ref{COL-fig:Loc_sub_cp}. From this plot it can be seen that the estimates converge exponentially with no overshoot. This underlines the conservativeness of the bound from Assumption \ref{COL-assum:vel} that is required for the error bound for the observer as explained in Subsection \ref{COL-subsec:obs}. The third plot in Figure \ref{COL-fig:Loc_sub_cp} depicts the yaw rate and the sway velocity induced by the motion. It can be seen that these do not converge to zero but converge to a periodic motion. Note that for circular motion in the absence of current the yaw rate would converge to zero. However, when current is present the vessel needs to change its turning rate depending on if it goes with or against the current. The relative surge velocity is given in the fourth plot from the top in Figure \ref{COL-fig:Loc_sub_cp} and shows that the surge velocity converges exponentially to the desired value. This plot is especially interesting in combination with the plot of the magnitude of $C_r$ given at the bottom of Figure \ref{COL-fig:Loc_sub_cp}. From this plot it can clearly be seen that Condition \ref{COL-cond:Cr} is verified both in steady-state and during the transient of the velocity controller. 

\begin{figure}[p]
\centering
\includegraphics[width=0.85\textwidth]{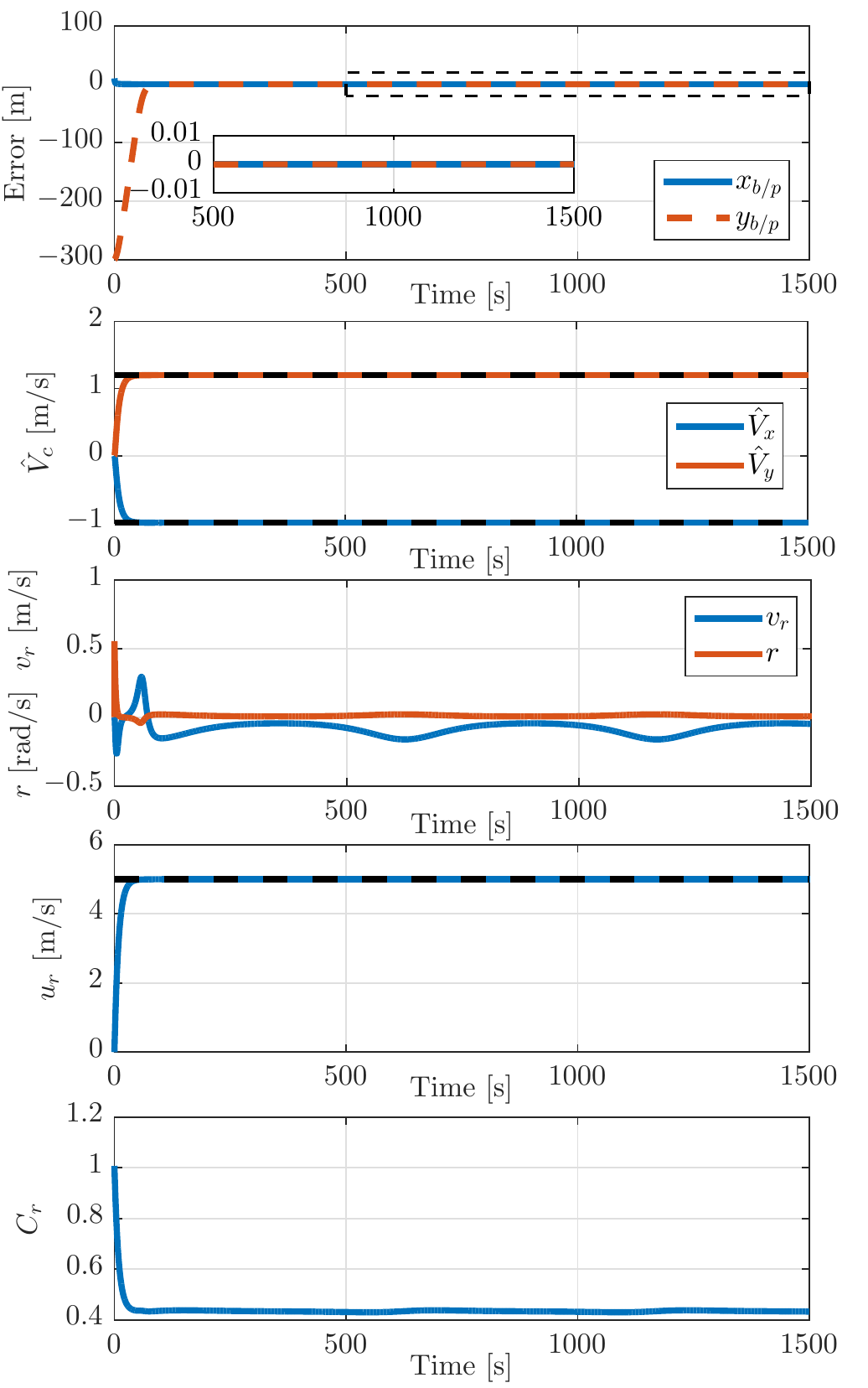}
\caption[Path following errros, current estimates, sway velocity, yaw rate, surge velocity , and size of $C_r$ over time.]{Path following errros plotted agains time (top), current estimates against time (second), sway velocity and yaw rate against time (third), surge velocity against time (fourth), and size of $C_r$ over time (bottom).}\label{COL-fig:Loc_sub_cp}
\end{figure}

\section{Conclusion} \label{COL-sec:cncl}
This paper considered curved-path following for underactuated marine vessels in the presence of constant ocean currents. In this approach the path is parametrised by a path variable with a update law that is designed to keep the vessel on the normal of a path-tangential reference frame. This assures the path-following error is defined as the shortest distance to the path. However, the disadvantage is that this type of update law has a singularity which only allows for local results. The vessel is steered using a line-of-sight guidance law, which to compensate for the unknown ocean currents is aided by an ocean current observer. The closed-loop system with the controllers and observer was analysed. This was done by first showing boundedness of the underactuated sway velocity dynamics under certain conditions. It was then shown that if these conditions are satisfied and the sway velocity is bounded the path-following errors are exponentially stable within the tube. Due to the singularity the feasibility of this problem depends on the initial conditions, the curvature of the path, and the magnitude of the ocean current. More specifically, the size of the tube in which the parametrisation is well defined was shown to be a function of the maximal curvature of the path. This implies that the combination of curvature and ocean current should be such that a suitable set of initial conditions exists for which the transient of the ocean current observer does not take the vessel out of the tube.

\appendix

\section*{Appendix}

\subsection*{Proof of Lemma \ref{COL-lem1}} \label{COL-app:lem1}

Consider the following part of the global closed-loop system:
%%%%%%%%%%%%%%%%%%%%%%%%%%%%%%
\begin{subequations} \label{COL-eq:36}
\begin{align} 
\begin{split}
\begin{bmatrix}  \dot{\tilde{\psi}} \\  \dot{\tilde{r}} \end{bmatrix}  &=  \begin{bmatrix}  C_r \tilde{r} \\ - k_1 \tilde{r} - k_2C_r \tilde{\psi} \end{bmatrix} \\ &+ \underbrace{ \begin{bmatrix} \left[ 1 + \frac{\partial g}{\partial c} 2 y_{b/p} + \frac{\partial g}{\partial b} \left( 2 \hat{V}_{N} \right)  \right] \frac{ \Delta \tilde{V}_{N} }{\Delta^2 + \left( y_{b/p} + g \right)^2 } \\ - \frac{\partial r_d}{\partial \bsym{p}_{b/p}}\begin{bmatrix}\tilde{V}_{T}\\\tilde{V}_{N}\end{bmatrix} - \frac{\partial r_d}{ \partial \tilde{\psi}} \left[ 1 + \frac{\partial g}{\partial c} 2 y_{b/p} + \frac{\partial g}{\partial b} 2 \hat{V}_{N} \right] \frac{ \Delta \tilde{V}_{N} }{\Delta^2 + \left( y_{b/p} + g \right)^2 } - \frac{\partial r_d}{ \partial \tilde{\bsym{p}}_{b/p}} \tilde{\bsym{V}}_c  \end{bmatrix} }_{R(h, y_{b/p}, x_{b/p}, \tilde{\psi}, \tilde{x}, \tilde{y})} \label{COL-eq:36a} \end{split} \\
\dot{v}_r &=   X ( u_{rd} + \tilde{u} ) r_d ( h, y_{b/p}, x_{b/p}, \tilde{\psi}, \tilde{x}, \tilde{y}) + X( u_{rd} + \tilde{u} ) \tilde{r} + Y( u_{rd} + \tilde{u} ) v_r \label{COL-eq:36b}
\end{align}
\end{subequations}
%%%%%%%%%%%%%%%%%%%%%%%%%%%%%%
From the boundedness of the vector $ [ \tilde{X}^T_2 , \kappa(\theta), u_{rd} , \dot{u}_{rd}, V_T, V_N ]^T $ we know that  $ \left\| [ \tilde{X}^T_2 , \kappa(\theta), u_{rd} , \dot{u}_{rd}, V_T, V_N ]^T \right\| \leq \beta_0 $, and from the expression for $r_d$ in the paper we can conclude the existence of positive functions $ a_{r_d}(\cdot) $, $ b_{r_d}(\cdot) $, $ a_{R}(\cdot) $, and $ b_R(\cdot) $ which are all continuous in their arguments and are such that such the following inequalities hold:
%%%%%%%%%%%%%%%%%%%%%%%%%%%%%%%
\begin{align} \label{COL-eq:born1}
 \left| r_d(\cdot) \right| \leq &~a_{r_d}(\Delta, \beta_0) \left| v_r \right|+ b_{r_d} (\Delta, \beta_0)	
\end{align}
%%%%%%%%%%%%%%%%%%%%%%%%%%%%%%
and,
%%%%%%%%%%%%%%%%%%%%%%%%%%%%%%
\begin{align} \label{COL-eq:born2}
\left\| R(\cdot) \right\|	\leq a_R( \Delta, \beta_0 ) \left| v_r \right| + b_R ( \Delta, \beta_0 )
\end{align}
%%%%%%%%%%%%%%%%%%%%%%%%%%%%%%
Then taking the following Lyapunov function candidate:
%%%%%%%%%%%%%%%%%%%%%%%%%%%%%%%
\begin{align} \label{COL-eq:V}
V_1(\tilde{\psi}, \tilde{r}, v_r) = \frac{1}{2}	 \left( k_2\tilde{\psi}^2 + \tilde{r}^2 + v^2_r  \right)
\end{align}
%%%%%%%%%%%%%%%%%%%%%%%%%%%%%%%
whose time derivative along the solutions of \eqref{COL-eq:36} is 
%%%%%%%%%%%%%%%%%%%%%%%%%%%%%%%
\begin{align} \label{COL-eq:V1}
\begin{split}
\dot{V}_1(\cdot) = &~k_2C_r \tilde{r} \tilde{\psi} - k_1 \tilde{r}^2 - k_2C_r \tilde{r} \tilde{\psi} + [\tilde{\psi}~~\tilde{r}] R(\cdot) + Y(u_{rd}+ \tilde{u}) v^2_r \\ &+ X(u_{rd} + \tilde{u}) \tilde{r} v_r + X(u_{rd}+\tilde{u}) r_d(\cdot) v_r 
\end{split}
\end{align}
Using Young's inequality we note that
\begin{align} \label{COL-eq:dVfc}
\begin{split}
\dot{V}_1(\cdot) \leq &~k_1 \tilde{r}^2 + \tilde{\psi}^2 + \tilde{r}^2 + R^2(\cdot) + Y(u_{rd} + \tilde{u}) v^2_r \\ &+ \left| X( u_{rd} + \beta_0 ) \right|  \left( \tilde{r}^2 + v^2_r \right) + \left| X (u_{rd} + \beta_0) \right| \left( r^2_d(\cdot) + v^2_r \right) \\ \leq & \alpha V + \beta ,\; \alpha \geq 0, \;\beta \geq 0
\end{split}
\end{align}
%%%%%%%%%%%%%%%%%%%%%%%%%%%%%%%
Note that since the differential inequality \eqref{COL-eq:dVfc} is scaler we can invoke the comparison lemma (\citet[Lemma 3.4]{khalil2002nonlinear}). From the comparison lemma we know that the solutions of differential inequality \eqref{COL-eq:dVfc} are bounded by the solutions of the linear system:
\begin{equation}
\dot{x} = \alpha x + \beta 
\end{equation}
which has solutions
\begin{equation}
x(t) = \frac{\|x(t_0)\|\alpha + \beta}{\alpha}e^{\alpha(t-t_0)} - \frac{\beta}{\alpha}
\end{equation}
Hence, from the comparison lemma we have that 
\begin{equation}
V_1(\cdot) \leq \frac{\|V_1(t_0)\|\alpha + \beta}{\alpha}e^{\alpha(t-t_0)} - \frac{\beta}{\alpha}
\end{equation}
which shows the solutions of $V_1(\cdot)$ are defined up to $t_{\max} = \infty$ and consequently from \eqref{COL-eq:V} it follows that the solutions of $\tilde{\psi}$, $\tilde{r}$, and $v_r$ must be defined up to $t_{\max} = \infty$. Hence, the solutions of \eqref{COL-eq:36} satisfy the definition of forward completeness (\citet{angeli1999forward}) and we can conclude forward completeness of trajectories of \eqref{COL-eq:36}. 

The forward completeness of trajectories of the global closed-loop system now depends on forward completeness of $\dot{y}_{b/p}$ from \eqref{COL-eq:35a}. We can conclude forward completeness of $\dot{y}_{b/p}$ by considering the Lyapunov function
\begin{equation} \label{COL-eq:Vfc2}
V_{2} = \frac{1}{2} y^2_{b/p}.
\end{equation}
The time derivative of \eqref{COL-eq:Vfc2} is given by
\begin{align}
\begin{split}
\dot{V}_{2} &= y_{b/p}\dot{y}_{b/p} \\
&\leq -u_{td} \frac{y_{b/p}}{\sqrt{\Delta^2+(y_{b/p}+g)^2}} + (G_1(\cdot)+\tilde{V}_N)y_{b/p} \\
&\leq (G_1(\cdot) + \tilde{V}_N)y_{b/p}
\end{split}
\end{align}
where using the bound on $G_1(\cdot)$ from the paper and Young's inequality we obtain
\begin{align} \label{COL-eq:dVfc2}
\dot{V}_2 &\leq V_2 + \frac{1}{2}\left(\zeta^2(\dot{\gamma}_p(\theta),u_{td})\Vert [\tilde{\psi},\tilde{r},x_{b/p}]^T \Vert^2 + \tilde{V}^2_N \right) \\
&\leq V_2 + \sigma_2(v_r,\tilde{\psi},\tilde{r},\tilde{V}_N,\tilde{V}_T,x_{b/p})
\end{align}
with $\sigma_2(\cdot) \in \mc{K}_{\infty}$. Consequently, if we view the arguments of $\sigma_2(\cdot)$ as input to the $y_{b/p}$ dynamics, then \eqref{COL-eq:dVfc2} satisfies \citet[Corollary 2.11]{angeli1999forward} and hence $\dot{x}_{b/p}$ and $\dot{y}_{b/p}$ are forward complete. Note that the arguments of $\sigma_2(\cdot)$ are all forward complete and therefore fit the definition of an input signal given in \citet{angeli1999forward}. We have now shown forward completeness of \eqref{COL-eq:35a} and \eqref{COL-eq:35c} and since \eqref{COL-eq:35b} is GES is is trivially forward complete. We can therefore claim forward completeness of the entire closed-loop system \eqref{COL-eq:clfull} and the proof of Lemma \ref{COL-lem1} is complete.

\subsection*{Proof of Lemma \ref{COL-lem2}} \label{COL-app:lem2}

Recall the sway velocity dynamics \eqref{COL-eq:35c}:
%%%%%%%%%%%%%%%%%%%%%%%%%%%%%%%%
\[  \dot{v}_r = X(\tilde{u} + u_{rd}) (r_d + \tilde{r}) + Y(u_{rd}+\tilde{u}) v_r ,~~Y(u_{rd}) < 0  \]
%%%%%%%%%%%%%%%%%%%%%%%%%%%%%%%%%
Consider the following Lyapunov function candidate:
%%%%%%%%%%%%%%%%%%%%%%%%%%%%%%%%%
\begin{align} \label{COL-eq:lyap}
	V_3(v_r) = \frac{1}{2} v^2_r
\end{align}
%%%%%%%%%%%%%%%%%%%%%%%%%%%%%%%%% 
The derivative of \eqref{COL-eq:lyap} along the solutions of \eqref{COL-eq:35c} is given by
%%%%%%%%%%%%%%%%%%%%%%%%%%%%%%%%%
\begin{align} \label{COL-eq:dlyapLem2}
\begin{split}
\dot{V}_3 = &~v_r \dot{v}_r = v_r X(u_{rd} + \tilde{u}) r_d + X(u_{rd} + \tilde{u}) v_r \tilde{r} + Y(u_{rd} + \tilde{u}) v^2_r \\ \leq &~X(u_{rd}) r_d v_r + a_x \tilde{u} r_d v_r + X(u_{rd}) v_r \tilde{r} + a_x \tilde{u} v_r \tilde{r} + a_y \tilde{u} v^2_r + Y(u_{rd}) v^2_r  
\end{split}
\end{align}
%%%%%%%%%%%%%%%%%%%%%%%%%%%%%%%%%%
where we used the fact that: 
%%%%%%%%%%%%%%%%%%%%%%%%%%%%%%
\begin{align} \label{COL-eq:YX}
 Y(u_{r}) = &~a_y u_r + b_y \\
 X(u_r) = &~a_x u_r + b_x    
\end{align}
%%%%%%%%%%%%%%%%%%%%%%%%%%%
The term $r_d v_r$ can be bounded as a function of $v_r$ as follows
%%%%%%%%%%%%%%%%%%%%%%%%%%%
\begin{align} \label{COL-eq:rdvrlem2}
\begin{split}
r_d v_r = & -\frac{v_r}{C_r} \left[ \kappa(\theta) \left( \frac{u_t \cos(\psi + \beta - \gamma_p(\theta)) + k_\delta x_{b/p} + \hat{V}_{T}}{ 1 - \kappa(\theta) y_{b/p} } \right) \right. \\ &+ \left.  \frac{ Y(u_r) v_r  u_{rd} - \dot{u}_{rd} v_r } { u^2_{rd} + v^2_r } + \frac{ \Delta }{\Delta^2 + \left( y_{b/p} + g \right)^2 } \left[  \dot{\hat{V}}_{N} \frac{ b + \sqrt{b^2-ac} }{-a} \right. \right. \\ &+ \left. \frac{\partial g}{\partial a} \left( 2 \hat{V}_{N} \dot{\hat{V}}_{N} - 2 u_{rd} \dot{u}_{rd} - 2 v_r Y(u_r) v_r \right) + \frac{\partial g}{\partial b} \left( 2 \dot{\hat{V}}_{N} y_{b/p} \right) \right. \\ &+ \left. \left. \left[ 1 + \frac{\partial g}{\partial c} 2 y_{b/p}  + \frac{\partial g}{\partial b} 2 \hat{V}_{N} \right]   \left(  \frac{- u_{td}  y_{b/p} } { \sqrt{\Delta^2 + (y_{b/p} + g)^2 } } + G_1(\cdot)  \right) \right]\right]	\\
\leq &~\frac{1}{C_r} \left|\kappa(\theta)\right| v^2_r \frac{1}{1-\kappa(\theta) y_{b/p}} + F_2 ( \tilde{X}_1,\tilde{X}_2, \Delta, V_T, V_N, u_{rd} ) v^2_r 
\\ &+  F_1 ( \tilde{X}_1,\tilde{X}_2, \Delta, V_T, V_N, u_{rd} ) v_r  
\\ &- \frac{1}{C_r} \left( \frac{ u_{rd}}{ u^2_{rd} + v^2_r } - \frac{2\Delta v_r}{\Delta^2 + (y_{b/p} + g)^2} \frac{\partial g }{\partial a}\right)Y(u_r)v^2_r
\end{split}
\end{align}
%%%%%%%%%%%%%%%%%%%%%%%%%%%%%%%%%
where $F_{1,2}(\cdot)$ are continuous functions in their arguments with:
%%%%%%%%%%%%%%%%%%%%%%%%%%%%%%%%%
\begin{align} \label{COL-eq:F}
F_2( 0,0,\Delta, V_T, V_N, u_{rd} ) = 0.
\end{align}
%%%%%%%%%%%%%%%%%%%%%%%%%%%%%%%%%
When substituting \eqref{COL-eq:rdvrlem2} in \eqref{COL-eq:dlyapLem2} we obtain
%%%%%%%%%%%%%%%%%%%%%%%%%%%%%%%%%
\begin{align} \label{COL-eq:Vdotlem2}
\begin{split}
\dot{V}_3 \leq &~X(u_{rd}) F_2 ( \tilde{X}_1,\tilde{X}_2, \Delta, V_T, V_N, u_{rd} ) v^2_r  + \left| \tfrac{ C^*_r - C_r }{ C_r C^*_r } \right|\left( \left| X(u_{rd}) \kappa(\theta) \right| - \left| Y(u_{rd}) \right| \right) v^2_r \\ & + \frac{1}{C^*_r} \left[ \left| X(u_{rd}) \right| \left| \kappa(\theta) \right| \left(1+\frac{y_{b/p}}{1 - \kappa(\theta) y_{b/p}}\right) - \left| Y(u_{rd}) \right| + a_y\tilde{u} \right] v^2_r \\ &+\left(X(u_{rd}) F_1 ( \tilde{X}_1,\tilde{X}_2, \Delta, V_T, V_N, u_{rd}) + a_x \tilde{u} (r_d+\tilde{r}) + X(u_{rd}) \tilde{r}\right)v_r   
\end{split}
\end{align}
%%%%%%%%%%%%%%%%%%%%%%%%%%%%%%%%%
where $C^*_r (v_r, y_{b/p} , \Delta , V_N, u_{rd} ) = C_r( v_r, y_{b/p} , \Delta , \hat{V}_N = V_N , u_r = u_{rd} )$. When substituting \eqref{COL-eq:rdvrlem2} in \eqref{COL-eq:dlyapLem2} we have used the fact that
\begin{equation}
\frac{1}{C_r} \left( \frac{ u_{rd}}{ u^2_{rd} + v^2_r } - \frac{2\Delta v_r}{\Delta^2 + (y_{b/p} + g)^2} \frac{\partial g }{\partial a}\right) X(u_r) Y(u_r)v^2_r = \frac{C_r -1}{C_r} Y(u_r)v^2_r.
\end{equation}

\begin{remark}
Note that $C^*_r (v_r, y_{b/p} , \Delta , V_{N}, u_{rd} )$ can be found independently of $y_{b/p}$ and $x_{b/p}$ since the terms in $C_r$ are bounded with respect to these variables. 
\end{remark}

Consequently, on the manifold where $(\tilde{X}_1,\tilde{X}_2)=0$ we have
\begin{align} \label{COL-eq:Vdotmanlem2}
\begin{split}
\dot{V}_3 &\leq \frac{1}{C^*_r}  \left(X_{\max} \left| \kappa(\theta) \right| - Y_{\min}\right)v^2_r + X(u_{rd}) F_1 ( 0,0, \Delta, V_T, V_N, u_{rd})|v_r|
\end{split}
\end{align}
which is bounded as long as 
\begin{align} \label{COL-eq:lem2c1}
X_{\max}\left| \kappa(\theta) \right| - Y_{\min} < 0.
\end{align} 
Hence, satisfaction of \eqref{COL-eq:lem2c1} renders the quadratic term in \eqref{COL-eq:Vdotmanlem2} negative and since the quadratic term is dominant for sufficiently large $v_r$, \eqref{COL-eq:Vdotmanlem2} is negative definite for sufficiently large $v_r$. If $\dot{V}_3$ is negative for sufficiently large $v_r$ this implies that $V_3$ decreases for sufficiently large $v_r$. Since $V_3 = 1/2v^2_r$, a decrease in $V_3$ implies a decrease in $v^2_r$ and by extension in $v_r$. Therefore, $v_r$ cannot increase above a certain value and $v_r$ is bounded near the manifold where $(\tilde{X}_1,\tilde{X}_2)=0$.  

Consequently, close to the manifold where $(\tilde{X}_1,\tilde{X}_2)=0$   the sufficient and necessary condition for local boundedness of \eqref{COL-eq:35c} is the following:
%%%%%%%%%%%%%%%%%%%%%%%%%%%%%%%%
\begin{align}
X_{\max} \left| \kappa(\theta) \right| - Y_{\min} < 0.
\end{align} 
which is satisfied if and only if the condition in Lemma \ref{COL-lem2} is satisfied.

\subsection*{Proof of Lemma \ref{COL-lem3}} \label{COL-app:lem3}

Recall the sway velocity dynamics \eqref{COL-eq:35c}:
%%%%%%%%%%%%%%%%%%%%%%%%%%%%%%%%
\[  \dot{v}_r = X(\tilde{u} + u_{rd}) (r_d + \tilde{r}) + Y(u_{rd}+\tilde{u}) v_r ,~~Y(u_{rd}) < 0  \]
%%%%%%%%%%%%%%%%%%%%%%%%%%%%%%%%%
Consider the following Lyapunov function candidate:
%%%%%%%%%%%%%%%%%%%%%%%%%%%%%%%%%
\begin{align} \label{COL-eq:lyap2}
	V_3(v_r) = \frac{1}{2} v^2_r
\end{align}
%%%%%%%%%%%%%%%%%%%%%%%%%%%%%%%%% 
The derivative of \eqref{COL-eq:lyap2} along the solutions of \eqref{COL-eq:35c} is given by
%%%%%%%%%%%%%%%%%%%%%%%%%%%%%%%%%
\begin{align} \label{COL-eq:dlyap2}
\begin{split}
\dot{V}_3 = &~v_r \dot{v}_r = v_r X(u_{rd} + \tilde{u}) r_d + X(u_{rd} + \tilde{u}) v_r \tilde{r} + Y(u_{rd} + \tilde{u}) v^2_r \\ \leq &~X(u_{rd}) r_d v_r + a_x \tilde{u} r_d v_r + X(u_{rd}) v_r \tilde{r} + a_x \tilde{u} v_r \tilde{r} + a_y \tilde{u} v^2_r + Y(u_{rd}) v^2_r  
\end{split}
\end{align}
%%%%%%%%%%%%%%%%%%%%%%%%%%%%%%%%%%
where we used the fact that: 
%%%%%%%%%%%%%%%%%%%%%%%%%%%%%%
\begin{align} \label{COL-eq:YX2}
 Y(u_{r}) = &~a_y u_r + b_y \\
 X(u_r) = &~a_x u_r + b_x    
\end{align}
%%%%%%%%%%%%%%%%%%%%%%%%%%%
The term $r_d v_r$ is given by:
%%%%%%%%%%%%%%%%%%%%%%%%%%%%%%%%
\begin{align}
\begin{split}
r_d v_r = & - \frac{1}{C_r} v_r \left[ \kappa(\theta) 
\frac{u_t \cos(\psi + \beta - \gamma_p(\theta))}{ 1 - \kappa(\theta) y_{b/p} } + \kappa(\theta) \frac{ k_\delta x_{b/p} + \hat{V}_T}{ 1 - \kappa(\theta) y_{b/p} } \right. 
\\ &+ \left.  \frac{ \Delta \left( b + \sqrt{b^2-ac} \right) }{a \Delta^2 + a \left( y_{b/p} + g \right)^2 }  \left( - k_{x_1} \tilde{x}  \sin(\gamma_p(\theta)) + k_{y_1} \tilde{y} \cos(\gamma_p(\theta)) \right)\right. 
\\ &+ \left. \frac{ \Delta \kappa(\theta)\hat{V}_T\left( b + \sqrt{b^2-ac} \right) }{a \Delta^2 + a \left( y_{b/p} + g \right)^2 } \left( \frac{u_t \cos(\psi + \beta - \gamma_p(\theta))}{ 1 - \kappa(\theta) y_{b/p} } + \frac{ k_\delta x_{b/p} - \hat{V}_T}{ 1 - \kappa(\theta)y_{b/p} }\right) \right. 
\\ &+ \left. \frac{ \Delta \frac{\partial g}{\partial a} 2 \hat{V}_N }{ \Delta^2 + \left( y_{b/p} + g \right)^2 } \left( k_{x_1} \tilde{x}  \sin(\gamma_p(\theta)) - k_{y_1} \tilde{y} \cos(\gamma_p(\theta)) \right) \right.
\\ &- \left. \frac{ \Delta \kappa(\theta) \frac{\partial g}{\partial a} 2 \hat{V}_N\hat{V}_T }{ \Delta^2 + \left( y_{b/p} + g \right)^2 } \left( \frac{u_t \cos(\psi + \beta - \gamma_p(\theta))}{1 - \kappa(\theta) y_{b/p}} + \frac{ k_\delta x_{b/p} + \hat{V}_T}{ 1 - \kappa(\theta)y_{b/p}}\right) \right. 
\\ &- \left. \frac{ \Delta \frac{\partial g}{\partial a} }{ \Delta^2 + \left( y_{b/p} + g \right)^2 } \left(2 u_{rd} \dot{u}_{rd} - 2 v_r Y(u_r) v_r \right) + \frac{ Y(u_r) v_r  u_{rd} - \dot{u}_{rd} v_r }{ u^2_{rd} + v^2_r }\right. 
\\ &+ \left.  \frac{ \Delta \frac{\partial g}{\partial b} 2 y_{b/p} }{\Delta^2 + \left( y_{b/p} + g \right)^2 } \left( k_{x_1} \tilde{x}  \sin(\gamma_p(\theta)) - k_{y_1} \tilde{y} \cos(\gamma_p(\theta)) \right)\right.
\\ &- \left. \frac{ \Delta \kappa(\theta) \frac{\partial g}{\partial b} 2 y_{b/p}\hat{V}_T }{ \Delta^2 + \left( y_{b/p} + g \right)^2 } \left(\frac{u_t \cos(\psi + \beta - \gamma_p(\theta))}{1 - \kappa(\theta) y_{b/p}}+ \frac{ k_\delta x_{b/p} + \hat{V}_T }{ 1 - \kappa(\theta)y_{b/p}} \right)\right. 
\\ &- \left. \phi(\cdot) u_{td}  \frac{ y_{b/p} } { \sqrt{\Delta^2 + (y_{b/p} + g)^2 } } + \phi(\cdot) \tilde{u} \sin(\psi-\gamma_p)\right. 
\\ &+ \left. \phi(\cdot) \left[ 1- \cos(\tilde{\psi}) \right] u_{td} \sin \left( \arctan \left(  \frac{y_{b/p} + g}{\Delta} \right) \right) \right. 
\\ &+ \left. \phi(\cdot) \cos \left( \arctan \left(  \frac{y_{b/p} + g}{\Delta} \right) \right) \sin (\tilde{\psi}) u_{td} \right.\\ &- \left. 2\phi(\cdot) x_{b/p} \kappa(\theta) \left( \frac{u_t \cos(\psi + \beta - \gamma_p(\theta))}{1 - \kappa(\theta) y_{b/p}} +\frac{ k_\delta x_{b/p} + \hat{V}_T}{ 1 - \kappa(\theta) y_{b/p} } \right) \right]
\end{split}
\end{align}
%%%%%%%%%%%%%%%%%%%%%%%%%%%%%%%%%
where the function $\phi(y_{b/p},v_r,u_{rd},\hat{V}_N,\Delta)$ is bounded by a constant with respect to $v_r$ and defined as 
%%%%%%%%%%%%%%%%%%%%%%%%%%%%%%%%%
\begin{align}
\phi(\cdot) &\triangleq \underbrace{\frac{2 \Delta y_{b/p}}{\Delta^2 + \left( y_{b/p} + g \right)^2 } \frac{\partial g}{\partial c} }_{\phi_1(\cdot)} +  \underbrace{\frac{ \Delta }{\Delta^2 + \left( y_{b/p} + g \right)^2 } }_{\phi_2(\cdot)} + \underbrace{\frac{ 2 \Delta \hat{V}_N }{\Delta^2 + \left( y_{b/p} + g \right)^2 } \frac{\partial g}{\partial b}}_{\phi_3(\cdot)} 	
\end{align}
%%%%%%%%%%%%%%%%%%%%%%%%%%%%%%%%%
We can rewrite $r_d v_r$ to obtain
%%%%%%%%%%%%%%%%%%%%%%%%%%%%%%%%% 
\begin{align}
\begin{split}
r_d v_r = & - \frac{1}{C_r} v_r \left[ \kappa(\theta) 
\frac{ u_t \cos(\psi + \beta - \gamma_p(\theta)) }{ 1 - \kappa(\theta) y_{b/p} }  \right. \\ &- \left. \phi_2(\cdot) u_{td}  \frac{ y_{b/p} + g } { \sqrt{\Delta^2 + (y_{b/p} + g)^2 } } + \phi_2(\cdot) \hat{V}_N  \right.  \\ &+ \left. \phi_2(\cdot) \left[ 1- \cos(\tilde{\psi}) \right] u_{td} \sin \left( \arctan \left(  \frac{y_{b/p} + g}{\Delta} \right) \right)  \right. \\ & +\left. \phi_2(\cdot) \cos \left( \arctan \left(  \frac{y_{b/p} + g}{\Delta} \right) \right) \sin (\tilde{\psi}) u_{td} \right] - \frac{1}{C_r} v_r \Phi_1(\cdot) \\
&- \frac{1}{C_r} \left( \frac{ u_{rd}}{ u^2_{rd} + v^2_r } - \frac{2\Delta v_r}{\Delta^2 + (y_{b/p} + g)^2} \frac{\partial g }{\partial a}\right)Y(u_r)v^2_r
\end{split}
\end{align} 
%%%%%%%%%%%%%%%%%%%%%%%%%%%%%%%%%%
where $\Phi_1(\cdot)$ collects terms that are bounded with respect to $v_r$ and terms that grow linearly with $v_r$ but vanish when $\tilde{X}_2=0$. The function $\Phi_1(\cdot)$ is defined as
%%%%%%%%%%%%%%%%%%%%%%%%%%%%%%%%%%%
\begin{align}
\begin{split}
\Phi_1(\cdot) \triangleq &~\kappa(\theta) \frac{ k_\delta x_{b/p} - \hat{V}_T}{ 1 - \kappa(\theta) y_{b/p} } - \frac{\dot{u}_{rd} v_r }{ u^2_{rd} + v^2_r } + \frac{2 u_{rd} \dot{u}_{rd}  \Delta  }{ \Delta^2 + \left( y_{b/p} + g \right)^2 } \frac{\partial g}{\partial a} 
\\ &+ \left.  \frac{ \Delta \left( b + \sqrt{b^2-ac} \right) }{a \Delta^2 + a \left( y_{b/p} + g \right)^2 }  \left( - k_{x_1} \tilde{x}  \sin(\gamma_p(\theta)) + k_{y_1} \tilde{y} \cos(\gamma_p(\theta)) \right) \right.
\\ &+ \left.  \frac{ \Delta \kappa(\theta)\hat{V}_T\left( b + \sqrt{b^2-ac} \right) }{a \Delta^2 + a \left( y_{b/p} + g \right)^2 } \left( \frac{u_t \cos(\psi + \beta - \gamma_p(\theta))}{ 1 - \kappa(\theta) y_{b/p} } + \frac{ k_\delta x_{b/p} + \hat{V}_T}{ 1 - \kappa(\theta)y_{b/p} }\right)  \right.  
\\ &+ \left. \frac{ \Delta \frac{\partial g}{\partial a} 2 \hat{V}_N }{ \Delta^2 + \left( y_{b/p} + g \right)^2 } \left( k_{x_1} \tilde{x}  \sin(\gamma_p(\theta)) - k_{y_1} \tilde{y} \cos(\gamma_p(\theta)) \right) \right.  
\\ &- \left. \frac{ \Delta \frac{\partial g}{\partial a} 2 \kappa(\theta) \hat{V}_N\hat{V}_T }{ \Delta^2 + \left( y_{b/p} + g \right)^2 } \left( \frac{u_t \cos(\psi + \beta - \gamma_p(\theta))}{1 - \kappa(\theta) y_{b/p}}+\frac{ k_\delta x_{b/p}+ \hat{V}_T}{ 1 - \kappa(\theta)y_{b/p}}\right) \right.  
\\ &+ \left.  \frac{ \Delta \frac{\partial g}{\partial b} 2 y_{b/p} }{\Delta^2 + \left( y_{b/p} + g \right)^2 } \left( k_{x_1} \tilde{x}  \sin(\gamma_p(\theta)) - k_{y_1} \tilde{y} \cos(\gamma_p(\theta)) \right)\right.  
\\ &- \left. \frac{ \Delta \frac{\partial g}{\partial b} 2 y_{b/p} \kappa(\theta)\hat{V}_T }{ \Delta^2 + \left( y_{b/p} + g \right)^2 } \left( \frac{u_t \cos(\psi + \beta - \gamma_p(\theta))}{1 - \kappa(\theta) y_{b/p}} + \frac{ k_\delta x_{b/p}+ \hat{V}_T }{ 1 - \kappa(\theta)y_{b/p}} \right) \right. 
\\ &- \left. \left( \phi_1(\cdot) + \phi_3(\cdot) \right) u_{td}  \frac{ y_{b/p} } { \sqrt{\Delta^2 + (y_{b/p} + g)^2 } } +  \phi(\cdot) \tilde{u} \sin(\psi-\gamma_p) \right. 
\\ &+ \left.  \left(  \phi_1(\cdot) + \phi_3(\cdot) \right)\left[ 1- \cos(\tilde{\psi}) \right] u_{td} \sin \left( \arctan \left(  \frac{y_{b/p} + g}{\Delta} \right) \right) \right. 
\\ &+ \left. \left( \phi_1(\cdot) + \phi_3(\cdot) \right) \cos \left( \arctan \left(  \frac{y_{b/p} + g}{\Delta} \right) \right) \sin (\tilde{\psi}) u_{td} \right.  
\\ &-  2\phi(\cdot)x_{b/p}\kappa(\theta)\left( \frac{u_t \cos(\psi + \beta - \gamma_p(\theta))}{1 - \kappa(\theta) y_{b/p}} + \frac{ k_\delta x_{b/p} + \hat{V}_T }{ 1 - \kappa(\theta)y_{b/p}} \right)
\end{split}
\end{align}
%%%%%%%%%%%%%%%%%%%%%%%%%%%%%%%%%%%
We now introduce $C^*_r(\cdot)$ as defined in the proof of Lemma \ref{COL-lem2}, so we can rewrite $r_d v_r$ to obtain:
%%%%%%%%%%%%%%%%%%%%%%%%%%%%%%%%%%%
\begin{align}
\begin{split}
r_d v_r = & - \frac{1}{C^*_r} v_r \left[ 
\frac{ \kappa(\theta)u_t \cos(\psi + \beta - \gamma_p(\theta)) }{ 1 - \kappa(\theta) y_{b/p} } - \right.  \\ & \left. \phi_2(\cdot) u_{td}  \frac{ y_{b/p} + g } { \sqrt{\Delta^2 + (y_{b/p} + g)^2 } } + \right.  \\ & \left. \phi_2(\cdot) \left[ 1- \cos(\tilde{\psi}) \right] u_{td} \sin \left( \arctan \left(  \frac{y_{b/p} + g}{\Delta} \right) \right) + \right.  \\ & \left. \phi_2(\cdot) \cos \left( \arctan \left(  \frac{y_{b/p} + g}{\Delta} \right) \right) \sin (\tilde{\psi}) u_{td} \right]  - \frac{1}{C_r} v_r \Phi_2(\cdot)\\
&- \frac{1}{C_r} \left( \frac{ u_{rd}}{ u^2_{rd} + v^2_r } - \frac{2\Delta v_r}{\Delta^2 + (y_{b/p} + g)^2} \frac{\partial g }{\partial a}\right)Y(u_r)v^2_r 	
\end{split}
\end{align} 
%%%%%%%%%%%%%%%%%%%%%%%%%%%%%%%%%%%
where $\Phi_2(\cdot)$ collects terms that are bounded with respect to $v_r$ and terms that grow linearly with $v_r$ but vanish when $\tilde{X}_2=0$. The function $\Phi_2(\cdot)$ is defined as
%%%%%%%%%%%%%%%%%%%%%%%%%%%%%%%%%
\begin{align}
\begin{split}
	\Phi_2(\cdot) \triangleq &~\Phi_1(\cdot) + \frac{C^*_r - C_r}{C^*_r} \left[ \phi_2(\cdot) \left[ 1- \cos(\tilde{\psi}) \right] u_{td} \sin \left( \arctan \left(  \frac{y_{b/p} + g}{\Delta} \right) \right) \right. \\ &+ \left. \frac{\kappa(\theta)u_t \cos(\psi + \beta - \gamma_p(\theta)) }{ 1 - \kappa(\theta) y_{b/p} } - \frac{\phi_2(\cdot) u_{td}\left( y_{b/p} + g\right) } { \sqrt{\Delta^2 + (y_{b/p} + g)^2 } }  \right. \\ &+  \phi_2(\cdot) \cos \left( \arctan \left(  \frac{y_{b/p} + g}{\Delta} \right) \right) \sin (\tilde{\psi}) u_{td} \\ &- \left.\left( \frac{ u_{rd}}{ u^2_{rd} + v^2_r } - \frac{2\Delta v_r}{\Delta^2 + (y_{b/p} + g)^2} \frac{\partial g }{\partial a}\right)Y(u_r)v_r \right] + \phi_2(\cdot) \hat{V}_N
\end{split}
\end{align}
%%%%%%%%%%%%%%%%%%%%%%%%%%%%%%%%%%%  
Considering the above we derive the following upper bound for $r_d v_r$:
%%%%%%%%%%%%%%%%%%%%%%%%%%%%%%%%%%%
\begin{align}
r_d v_r \leq & \left| \frac{1}{C^*_r} v_r \right| \left[  
\frac{ \left|\kappa(\theta)  \right|u_t }{ 1 - \kappa(\theta) y_{b/p} } + 4  \left|\phi_2(\cdot) \right| u_{td} \right]  - \frac{1}{C_r} v_r \Phi_2(\cdot)\\
&- \frac{1}{C_r} \left( \frac{ u_{rd}}{ u^2_{rd} + v^2_r } - \frac{2\Delta v_r}{\Delta^2 + (y_{b/p} + g)^2} \frac{\partial g }{\partial a}\right)Y(u_r)v^2_r 	
\end{align} 
%%%%%%%%%%%%%%%%%%%%%%%%%%%%%%
Using the fact that: $ u_t \leq \left|u_r\right| + \left|v_r\right| $, we obtain:
%%%%%%%%%%%%%%%%%%%%%%%%%%%%%%%
\begin{align} \label{COL-eq:rdvrlem2fi}
\begin{split}
r_d v_r \leq & \left| \frac{v_r}{C^*_r}  \right| \left[  
\frac{\left|\kappa(\theta)\right|\left( \left|u_r\right| + \left|v_r\right|\right) }{ 1 - \kappa(\theta) y_{b/p} } + 4  \left| \phi_2(\cdot) \right| \left|u_{rd}\right| + 4  \left| \phi_2(\cdot) \right| \left|v_r\right| \right] - \frac{v_r}{C_r}  \Phi_2(\cdot) \\
&- \frac{1}{C_r} \left( \frac{ u_{rd}}{ u^2_{rd} + v^2_r } - \frac{2\Delta v_r}{\Delta^2 + (y_{b/p} + g)^2} \frac{\partial g }{\partial a}\right)Y(u_r)v^2_r 	
\\ \leq &	\left| \frac{1}{C^*_r}  \right| \frac{\left|\kappa(\theta)  \right|v^2_r}{1-\kappa(\theta) y_{b/p}} + 4  \left| \frac{1}{C^*_r}  \right| \left| \phi_2(\cdot) \right| v^2_r + \Phi_3 (\cdot)\\
&- \frac{1}{C_r} \left( \frac{ u_{rd}}{ u^2_{rd} + v^2_r } - \frac{2\Delta v_r}{\Delta^2 + (y_{b/p} + g)^2} \frac{\partial g }{\partial a}\right)Y(u_r)v^2_r 	
\end{split}
\end{align}
where $\Phi_3$ collects the terms that grow linear in $v_r$ and terms that grow quadratically in $v_r$ but vanish when $\tilde{X}_2 = 0$. The function $\Phi_3$ is defined as
\begin{align}
	\Phi_3 (\cdot) \triangleq & \left| \frac{1}{C^*_r} \right| \frac{\left|\kappa(\theta)  \right|  \left| v_r u_r \right| }{1 - \kappa(\theta) y_{b/p}} + \left|\frac{1}{C^*_r} \right|
	\left| v_r u_{rd} \right| \left| \phi_2 (\cdot) \right| - \frac{1}{C_r} v_r \Phi_2(\cdot)
\end{align}
Observing the definition of $\Phi_3(\cdot)$ one can easily conclude the existence of three continuous positive functions $ F_{0,2} ( \tilde{X}_1, \tilde{X}_2 , u_{rd} , \dot{u}_{rd} , V_T , V_N, \Delta ) $ which are bounded since the vector $ [ \tilde{X}^T_2, u_{rd}, \dot{u}_{rd}, V_T, V_N, \Delta ]^T $ is bounded, and where 
\[ 
F_{2} ( \tilde{X}_1, \tilde{X}_2 = 0 , u_{rd} , \dot{u}_{rd} , V_T , V_N, \Delta ) = 0, 
\] 
such that:
\begin{align}
\Phi_3(\cdot) \leq &~F_2(\cdot) v^2_r + F_1(\cdot) v_r + F_0(\cdot)
\end{align}
Consequently, when we substitute \eqref{COL-eq:rdvrlem2fi} in \eqref{COL-eq:dlyap2} obtain:
\begin{align} \label{COL-eq:dlyaplem3}
\begin{split}
\dot{V}_3 = v_r \dot{v}_r \leq & \left| X(u_{rd}) \right| \left[ \left| \frac{1}{C^*_r}  \right| 
\frac{\left|\kappa(\theta) \right|v^2_r}{1-\kappa(\theta) y_{b/p}} + 4  \left| \frac{1}{C^*_r}  \right| \left| \phi_2(\cdot) \right| v^2_r + \Phi_3 (\cdot) \right]  \\ & + a_x \tilde{u} r_d v_r + X(u_{rd}) v_r \tilde{r} + a_x \tilde{u} v_r \tilde{r} +  a_y \tilde{u} v^2_r + Y(u_{rd}) v^2_r \\
&- \frac{1}{C_r} \left( \frac{ u_{rd}}{ u^2_{rd} + v^2_r } - \frac{2\Delta v_r}{\Delta^2 + (y_{b/p} + g)^2} \frac{\partial g }{\partial a}\right)X(u_{rd}Y(u_{rd})v^2_r 	
\\ \leq & \left| \frac{1}{C^*_r}  \right| \left[ \frac{ X_{\max} \kappa_{\max}}{1-\kappa(\theta) y_{b/p}} + 4 X_{\max} \left| \phi_2(\cdot) \right| - Y_{\min} \right] v^2_r   \\ &+  \left|X(u_{rd})\right| \left|\Phi_3 (\cdot) \right| +  a_x \tilde{u} r_d v_r + X(u_{rd}) v_r \tilde{r} + a_x \tilde{u} v_r \tilde{r} + a_y \tilde{u} v^2_r 
\end{split}
\end{align}
%%%%%%%%%%%%%%%%%%%%%%%%%%%%%%%%%
To have boundedness of $v_r$ for small values of $\tilde{X}_2$ we have to satisfy the following inequality:
%%%%%%%%%%%%%%%%%%%%%%%%%%%%%%%%%
\begin{align} \label{COL-eq:condlem3}
\frac{ X_{\max} \kappa_{\max}}{1-\kappa(\theta) y_{b/p}} + 4 X_{\max} \left| \phi_2(\cdot) \right| -  Y_{\min} < 0
\end{align}
such that the quadratic term in \eqref{COL-eq:dlyaplem3} is negative. Using \eqref{COL-eq:sigma} we need to choose $\Delta$, such that:
\begin{align}
  \left| \phi_2(\cdot) \right| < \frac{ \left[  Y_{\min} -X_{\max} \kappa_{\max}\frac{ 1 }{ \sigma } \right] }{ 4 X_{\max} } > 0,
\end{align}
since $\left| \phi_2(\cdot) \right| \leq \frac{1}{\Delta} $, we can take $ \Delta > \frac{ 4 X_{\max} }{ \left[  Y_{\min} - X_{\max} \kappa_{\max} \frac{ 1 }{ \sigma } \right] }  $ such that \eqref{COL-eq:condlem3} holds. Consequently, near the manifold $\tilde{X}_2 = 0$ it holds that \eqref{COL-eq:dlyaplem3} is negative definite for sufficiently large $v_r$. If $\dot{V}_3$ is negative for sufficiently large $v_r$ this implies that $V_3$ decreases for sufficiently large $v_r$. Since $V_3 = 1/2v^2_r$, a decrease in $V_3$ implies a decrease in $v^2_r$ and by extension in $v_r$. Consequently, $v_r$ cannot increase above a certain value and $v_r$ is bounded near $\tilde{X}_2 = 0$.

\bibliography{biblio}

\bibliographystyle{plainnat}

\end{document}